\def\dayte{1}
\def\arXiv{}
\newcommand{\moti}[1]{\hl{\textbf{M}: #1}}
\def\nopfs{0} % removes proofs iff flag is set to 1
\newtheorem{theorem}{Theorem}[section]
\newtheorem{definition}[theorem]{Definition}
\newtheorem{lemma}[theorem]{Lemma}
\newtheorem{observation}[theorem]{Observation}
\newcommand{\NN}{\mathbb{N}}
\newcommand{\outt}{\mathrm{out}}
\newcommand{\rg}{\operatorname{rg}}
\newcommand{\ANDD}{\operatorname{AND}}
\newcommand{\AOI}{\operatorname{AOI}}
\newcommand{\ORR}{\operatorname{OR}}
\newcommand{\XORR}{\operatorname{XOR}}
\newcommand{\rgmax}{\operatorname{max^{\rg}}}
\newcommand{\rgmin}{\operatorname{min^{\rg}}}
\newcommand{\rgmaxM}{\operatorname{max^{\rg}_{\metas}}}
\newcommand{\rgminM}{\operatorname{min^{\rg}_{\metas}}}
\newcommand{\cmux}{\textsc{cmux}}
\newcommand{\mux}{\textsc{mux}}
\newcommand{\validrg}[1]{\mathcal{S}^{#1}_{\rg}}
\newcommand{\metas}{\textnormal{\texttt{M}}}
\newcommand{\twosort}{\operatorname{2-sort}}
\newcommand{\foursort}{\operatorname{4-sort}}
\newcommand{\sevensort}{\operatorname{7-sort}}
\newcommand{\tensort}{\operatorname{10-sort}}
\newcommand{\tensortc}{\operatorname{10-sort_\#}}
\newcommand{\tensortd}{\operatorname{10-sort_d}}
\newcommand{\parity}{\operatorname{par}}
\newcommand{\res}{\operatorname{res}}
\newcommand{\binary}{\operatorname{Bin-comp}}
\newcommand{\bigstarr}{\mathop{\raisebox{-.7pt}{\ensuremath{\mathlarger{\mathlarger{\mathlarger{*}}}}}}}
\DeclareMathOperator*{\bigdiamond}{\scalerel*{\diamond}{\textstyle\sum}}
\DeclareMathOperator*{\hatdiamM}{\,\hat{\diamond}_{\metas}\,}
\newcommand{\delay}{\operatorname{delay}}
\newcommand{\cost}{\operatorname{cost}}
\newcommand{\repg}[1]{\langle #1 \rangle}
\begin{document}

\title{Optimal Metastability-Containing Sorting Networks}
\date{}
\author[1]{Johannes~Bund}
\author[1]{Christoph~Lenzen}
\author[2]{Moti~Medina}
\affil[1]{Saarland Informatics Campus, MPI for Informatics, Germany,
          Email:~\texttt{\{jbund,clenzen\}@mpi-inf.mpg.de}}
\affil[2]{Dept. of Electrical \& Computer Engineering,
          Ben-Gurion~University~of~the~Negev, Israel,
          Email:~\texttt{medinamo@bgu.ac.il}}

\maketitle

\begin{abstract}
When setup/hold times of bistable elements are violated, they may become
metastable, i.e., enter a transient state that is neither digital 0 nor
1~\cite{Mar81}. In general, metastability cannot be avoided, a problem that
manifests whenever taking discrete measurements of analog values. Metastability
of the output then reflects uncertainty as to whether a measurement should be
rounded up or down to the next possible measurement outcome.

Surprisingly, Lenzen \& Medina (ASYNC 2016) showed that metastability can be
\emph{contained}, i.e., measurement values can be correctly sorted
\emph{without} resolving metastability first. However, both their work and the
state of the art by Bund et al.\ (DATE 2017) leave open whether such a solution
can be as small and fast as standard sorting networks. We show that this is
indeed possible, by providing a circuit that sorts Gray code inputs (possibly
containing a metastable bit) and has asymptotically optimal depth and size.
Concretely, for $10$-channel sorting networks and $16$-bit wide inputs, we
improve by $48.46\%$ in delay and by $71.58\%$ in area over Bund et al. Our
simulations indicate that straightforward transistor-level optimization is
likely to result in performance on par with standard (non-containing) solutions.
\end{abstract}

%\keywords{worst-case propagation of metastability; sorting %networks; combinational circuits}

\section{Introduction}\label{sec:intro}

Metastability is one of the basic obstacles when crossing clock domains,
potentially resulting in soft errors with critical consequences~\cite{ginosar11tutorial}. As it has
been shown that there is no deterministic way of avoiding metastability~\cite{Mar81},
synchronizers~\cite{kinniment08} are employed to reduce the error probability to
tolerable levels. Besides energy and chip area, this approach costs time: the
more time is allocated for metastability resolution, the smaller is the
probability of a (possibly devastating) metastability-induced fault.

Recently, a different approach has been proposed, coined
\emph{metastability-containing} (MC) circuits~\cite{friedrichs16}. The idea is
to accept (a limited amount of) metastability in the input to a digital circuit
and guarantee limited metastability of its output, such that the result is still
useful. The authors of~\cite{date17,async16} apply this approach to a
fundamental primitive: sorting. However, the state-of-the-art~\cite{date17} are
circuits that are by a $\Theta(\log B)$ factor larger than non-containing
solutions, where $B$ is the bit width of inputs. Accordingly, the authors pose
the following question:
%\vspace*{-.1cm}
\begin{quote}
%\hspace*{-.105cm}
\centering
\textit{``What is the optimum cost of the $\twosort$ primitive?''}\hspace*{-.105cm}
\end{quote}
%\vspace*{-.1cm}
We argue that answering this question is critical, as the performance penalty
imposed by current MC sorting primitives is not outweighed by the avoidance of
synchronizers.

%###
\paragraph*{Our Contribution}
%###
We answer the above question by providing a $B$-bit MC $\twosort$ circuit of
depth $O(\log B)$ and $O(B)$ gates. Trivially, any such building block with gates of
constant fan-in must have this asymptotic depth and gate count, and it improves
by a factor of $\Theta(\log B)$ on the gate complexity of~\cite{date17}.
Furthermore, we provide optimized building blocks that significantly improve the
leading constants of these complexity bounds. See Figure~\ref{fig:areaplot} for
our improvements over prior work; specifically, for $16$-bit inputs, area and
delay decrease by up to $71.58\%$ and $48.46\%$ respectively.

Plugging our circuit into (optimal depth or size) sorting
networks~\cite{bundala2014optimal,codish2014twenty,knuth1998art}, we obtain
efficient combinational metastability-containing sorting circuits,
cf.\ Table~\ref{table:sorting}. In general, plugging our $\twosort$ circuit into
an $n$-channel sorting network of depth $O(\log n)$ with $O(n \log
n)$ $\twosort$ elements~\cite{ajtai83}, we obtain an asymptotically optimal MC
sorting network of depth $O(\log B \log n)$ and $O(B n \log n)$ gates.

\begin{figure}
\centering
\includegraphics[width=.5\columnwidth]{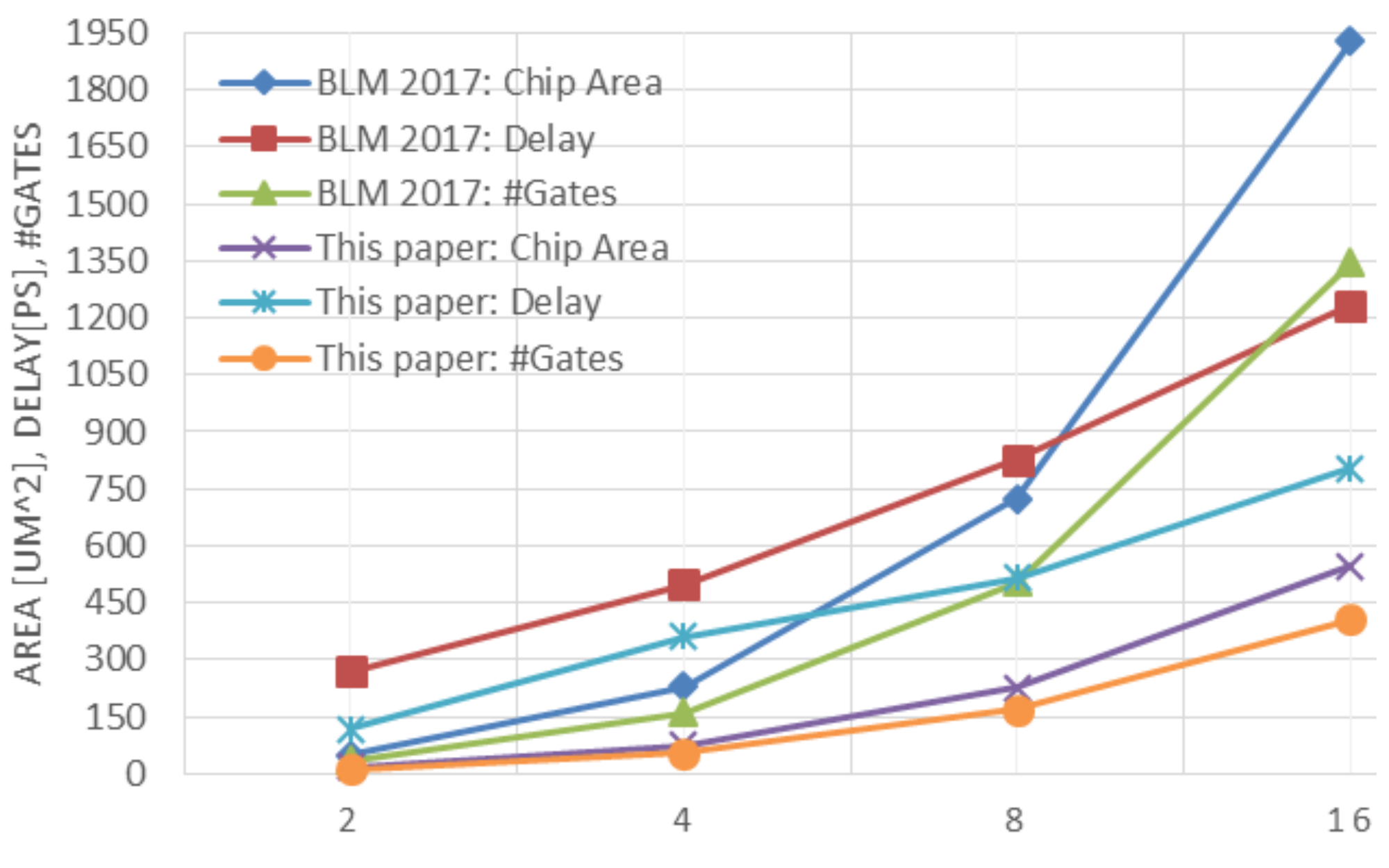}
\caption{Area, delay, and gate count of $\twosort(B)$ for $B \in
\{2,4,8,16\}$; we compare our solution to~\cite{date17}.}\label{fig:areaplot}
\end{figure}

%###
\paragraph*{Further Related Work}
%###
Ladner and Fischer~\cite{ladner1980parallel} studied the problem of computing
all the prefixes of applications of an associative operator on an input string
of length $n$. They designed and analyze a recursive construction which computes
all these prefixes in parallel. The resulting parallel prefix computation (PPC)
circuit has depth of $O(\log n)$ and gate count of $O(n)$ (assuming that the
implementation of the associative operator has constant size and constant
depth). We make use of their construction as part of ours.

\section{Model and Problem}\label{sec:model}

In this section, we discuss how to model metastability in a worst-case fashion
and formally specify the input/output behavior of our circuits.

We use the following basic notation. For $N\in \mathbb{N}$, we set $[N]:=
\{0,\ldots,N-1\}$. For a binary $B$-bit string $g$, denote by $g_i$ its $i$-th
bit, i.e., $g=g_1g_2\ldots g_B$. We use the shorthand $g_{i,j}:=g_i\ldots g_j$.
Let $\parity(g)$ denote the parity of $g$, i.e, $\parity(g) =
\sum_{i=1}^{B}g_i\bmod 2$.

%###
\paragraph*{Reflected Binary Gray Code}
%###

Due to possible metastability of inputs, we use Gray code. Denote by
$\repg{\cdot}$ the decoding function of a Gray code string, i.e., for $x\in
[N]$, $\repg{\rg_B(x)}=x$. As each $B$-bit string is a codeword, the code is a
bijection and the decoding function also defines the encoding function
$\rg_B:[N]\to \{0,1\}^B$.
We define $B$-bit binary reflected Gray code recursively, where a $1$-bit code
is given by $\rg_1(0)=0$ and $\rg_1(1)=1$. For $B>1$, we start with the first
bit fixed to $0$ and counting with $\rg_{B-1}(\cdot)$ (for the first $2^{B-1}-1$
codewords), then toggle the first bit to $1$, and finally ``count down''
$\rg_{B-1}(\cdot)$ while fixing the first bit again,
cf.~Table~\ref{table:graystruct}. Formally, this yields
\begin{equation*}
\rg_B(x):=\begin{cases}
0\rg_{B-1}(x)& \mbox{if }x\in [2^{B-1}]\\
1\rg_{B-1}(2^B-1-x)& \mbox{if }x\in [2^B]\setminus [2^{B-1}].
\end{cases}
\end{equation*}

\begin{table}
\begin{center}
\begin{tabular}{| c   c| c   c  || c   c  | c   c |}
\hline
 $\#$ & $g_1,g_{2,4}$ & $\#$ & $g_1,g_{2,4}$ & $\#$ & $g_1,g_{2,4}$ & $\#$ & $g_1,g_{2,4}$\tabularnewline
\hline
\hline
$0$           & $0,000$ &$4$           & $0,110$ & $8$           & $1,100$ & $12$           & $1,010$\tabularnewline
$1$           & $0,001$ & $5$           & $0,111$ & $9$           & $1,101$ & $13$           & $1,011$ \tabularnewline
$2$           & $0,011$ & $6$           & $0,101$ & $10$           & $1,111$ & $14$           & $1,001$ \tabularnewline
$3$           & $0,010$ & $7$           & $0,100$ & $11$           & $1,110$ & $15$           & $1,000$ \tabularnewline
\hline
\end{tabular}
\end{center}
\caption{4-bit binary reflected Gray code}
\label{table:graystruct}
\end{table}

We define the maximum and minimum of two binary reflected Gray code strings, $\rgmax$ and $\rgmin$ respectively,  in the usual way, as follows. For two binary reflected Gray
code strings $g,h\in \{0,1\}^B$, $\rgmax$ and $\rgmin$ are defined as
\begin{align*}
\left(\rgmax\{g,h\},\rgmin\{g,h\}\right)&:=\begin{cases}
(g,h) & \mbox{if }\repg{g}\geq \repg{h}\\
(h,g) & \mbox{if }\repg{g}\leq \repg{h}.
\end{cases}
\end{align*}

%###
\paragraph*{Valid Strings}
%###
In~\cite{async16}, the authors represent metastable ``bits'' by $\metas$. The
inputs to the sorting circuit may have some metastable bits, which means that
the respective signals behave out-of-spec from the perspective of Boolean logic.
Such inputs, referred to as \emph{valid strings}, are introduced with the help
of the following operator.
\begin{definition}[The $*$ operator~\cite{async16}]\label{def:star}
For $B\in \NN$, define the operator
$*: \{0,1,\metas\}^B \times \{0,1,\metas\}^B \rightarrow \{0,1,\metas\}^B$ by
\begin{equation*}
\forall i\in \{1,\ldots,B\}:(x * y)_i := \begin{cases}
x_i & \mbox{if }x_i=y_i\\
\metas & \mbox{else.}
\end{cases}
\end{equation*}
\end{definition}
\begin{observation}
The operator $*$ is associative and commutative. Hence, for a set
$S=\{x^{(1)},\ldots,x^{(k)}\}$ of $B$-bit strings, we can use the shorthand
$$\bigstarr S := \bigstarr_{x\in S} x := x^{(1)}*x^{(2)}*\ldots*x^{(k)}.$$
We call $\bigstarr S$ \emph{the superposition of the strings in $S$.}
\end{observation}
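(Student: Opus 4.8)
The plan is to reduce both properties to the single-symbol case and then recognize the single-symbol operation as (isomorphic to) set union. Since the operator $*$ is defined coordinate-wise, for every $i$ the value $(x*y)_i$ depends only on $x_i$ and $y_i$; hence both commutativity and associativity of $*$ on $B$-bit strings follow immediately once they are established for the induced binary operation on single symbols from $\{0,1,\metas\}$, given by $a*b=a$ if $a=b$ and $a*b=\metas$ otherwise. I would therefore fix one coordinate and argue entirely within $\{0,1,\metas\}$.

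Commutativity is direct: the defining condition ``$a=b$'' is symmetric, so in the agreeing case $a*b=a=b=b*a$, and in the disagreeing case both $a*b$ and $b*a$ equal $\metas$.

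For associativity, rather than perform the full three-way case split, I would transport the operation to set union. Define $\phi\colon\{0,1,\metas\}\to 2^{\{0,1\}}$ by $\phi(0)=\{0\}$, $\phi(1)=\{1\}$, and $\phi(\metas)=\{0,1\}$; this is a bijection onto the three nonempty subsets of $\{0,1\}$. The one computation to carry out is the verification that $\phi(a*b)=\phi(a)\cup\phi(b)$ for all $a,b$: when $a=b$ both sides equal $\phi(a)$, and when $a\neq b$ the union of the two distinct sets $\phi(a),\phi(b)$ (any two distinct sets among $\{0\},\{1\},\{0,1\}$) is all of $\{0,1\}=\phi(\metas)=\phi(a*b)$. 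Granting this identity, associativity of $*$ is inherited from associativity of $\cup$: since $\phi((a*b)*c)=\phi(a)\cup\phi(b)\cup\phi(c)=\phi(a*(b*c))$ and $\phi$ is injective, we get $(a*b)*c=a*(b*c)$. Lifting back coordinate-wise yields associativity of $*$ on $B$-bit strings.

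Finally, the ``Hence'' clause---that $\bigstarr S$ is a well-defined shorthand for a finite set $S$---requires no further work: associativity makes every bracketing of $x^{(1)}*\cdots*x^{(k)}$ agree, and commutativity makes the result independent of the order in which the elements of $S$ are listed, so the superposition depends only on the set $S$. The main ``obstacle'' here is merely bookkeeping; the only point that needs care is the treatment of $\metas$, and the $\phi$-to-$\cup$ isomorphism is precisely what keeps that case analysis from proliferating.
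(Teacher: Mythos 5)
Your proof is correct. Note, however, that the paper offers no proof of this observation at all---it is stated as immediate from the definition of $*$---so there is no argument of the authors' to compare against. Your coordinate-wise reduction is the right first move (since $(x*y)_i$ depends only on $x_i$ and $y_i$, both properties reduce to the three-element alphabet), and commutativity is indeed trivial from the symmetry of the condition $a=b$. For associativity, your transport along $\phi$ to set union is clean and checks out: the identity $\phi(a*b)=\phi(a)\cup\phi(b)$ holds in both the agreeing and disagreeing cases, and injectivity of $\phi$ then inherits associativity from $\cup$. It is worth pointing out that your $\phi$ is exactly the single-symbol restriction of the paper's resolution operator, i.e.\ $\phi(a)=\res(a)$ for $a\in\{0,1,\metas\}$, so your key identity is the one-bit statement ``$\res(a*b)=\res(a)\cup\res(b)$''; this is consistent with (and slightly sharper, on single symbols, than) the paper's later observation that $\bigstarr \res(x)=x$ and $S\subseteq\res(\bigstarr S)$. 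One small caution if you were tempted to generalize: the union identity fails for strings of length greater than one (e.g.\ $\res(00*11)$ has four elements while $\res(00)\cup\res(11)$ has two), which is precisely why the coordinate-wise reduction must come first, as you correctly arranged. The concluding ``Hence'' step is also handled properly: associativity plus commutativity make the $k$-fold product independent of bracketing and of the enumeration order of $S$, so $\bigstarr S$ is well defined.
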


Valid strings have at most one metastable bit. If this bit resolves to either
$0$ or $1$, the resulting string encodes either $x$ or $x+1$ for some $x$,
cf.~Table~\ref{table:validinputs}.
\begin{definition}[Valid Strings~\cite{async16}]\label{def:validinput}
Let $B\in \mathbb{N}$ and $N=2^B$. Then, the set of \emph{valid strings of
length $B$} is
\begin{equation*}
\validrg{B}:=\rg_B([N])\cup \bigcup_{x\in [N-1]}
\{\rg_B(x)*\rg_B(x+1)\}\:,
\end{equation*}
where for a set $A$ we abbreviate $f(A):=\{f(y)\,|\,y\in A\}$.
\end{definition}
As pointed out in~\cite{date17}, inputs that are valid strings may, e.g., arise
from using suitable time-to-digital converters for measuring time
differences~\cite{tdc16}.

\begin{table}
\begin{center}
\begin{tabular}{| c | c || c  | c | c  |  c  || c | c |}
\hline
 $g$ & $\repg{g}$ & $g$ & $\repg{g}$ & $g$ & $\repg{g}$ & $g$ & $\repg{g}$\tabularnewline
\hline
\hline
$0000$&$0$   & $0110$ &$4$   & $1100$ &$8$   & $1010$ &$12$\tabularnewline
$000\metas$&$-$ & $011\metas$ &$-$ & $110\metas$ &$-$ &$101\metas$&$-$\tabularnewline
$0001$&$1$   & $0111$ &$5$   & $1101$ &$9$   & $1011$ &$13$\tabularnewline
$00\metas1$&$-$ & $01\metas1$ &$-$ & $11\metas1$ &$-$ &$10\metas1$&$-$\tabularnewline
$0011$&$2$   & $0101$ &$6$   & $1111$ &$10$  & $1001$ &$14$\tabularnewline
$001\metas$&$-$ & $010\metas$ &$-$ & $111\metas$&$-$&$100\metas$&$-$\tabularnewline
$0010$&$3$   & $0100$ &$7$   & $1110$ &$11$  & $1000$ &$15$\tabularnewline
$0\metas10$&$-$ & $\metas100$ &$-$ & $1\metas10$ &$-$& $-$ &$-$\tabularnewline
\hline
\end{tabular}
\end{center}
\caption{$4$-bit valid inputs}
\label{table:validinputs}
\end{table}
\ifnum\nopfs=0
\begin{observation}\label{obs:valid_substrings}
For any $1\leq i\leq j\leq B$ and $g\in \validrg{B}$, $g_{i,j}\in
\validrg{j-i+1}$, i.e., $g_{i,j}$ is a valid string, too.
\end{observation}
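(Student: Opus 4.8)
The plan is to prove the claim by induction on $B$, reducing the general substring operation to two elementary steps: deleting the first bit and deleting the last bit of a valid string. The base case $B=1$ is immediate, since the only substring is the whole string and $\validrg{1}=\{0,1,\metas\}$. For the inductive step I will establish two closure lemmas: (i) $g\in\validrg{B}\Rightarrow g_{2,B}\in\validrg{B-1}$, and (ii) $g\in\validrg{B}\Rightarrow g_{1,B-1}\in\validrg{B-1}$. Granting these, an arbitrary substring $g_{i,j}$ with $(i,j)\neq(1,B)$ is obtained from a strictly shorter valid string to which the induction hypothesis applies: if $j<B$ then $g_{i,j}=(g_{1,B-1})_{i,j}$ with $g_{1,B-1}\in\validrg{B-1}$ by (ii), and otherwise $j=B$ forces $i\geq2$, so $g_{i,j}=(g_{2,B})_{i-1,j-1}$ with $g_{2,B}\in\validrg{B-1}$ by (i). The case $(i,j)=(1,B)$ is just $g$ itself.

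For lemma (ii) I will use the standard prefix property of the reflected code, $\rg_B(x)_{1,B-1}=\rg_{B-1}(\lfloor x/2\rfloor)$, which follows directly from the recursive definition. If $g=\rg_B(x)$ then $g_{1,B-1}$ is a codeword and we are done; if $g=\rg_B(x)*\rg_B(x+1)$ then $g_{1,B-1}=\rg_{B-1}(\lfloor x/2\rfloor)*\rg_{B-1}(\lfloor (x+1)/2\rfloor)$, and since $\lfloor (x+1)/2\rfloor-\lfloor x/2\rfloor\in\{0,1\}$ this is either a single codeword or a superposition of two consecutive codewords, hence valid. For lemma (i) I will split on the first bit using $\rg_B(x)=0\rg_{B-1}(x)$ for $x\in[2^{B-1}]$ and $\rg_B(x)=1\rg_{B-1}(2^B-1-x)$ otherwise. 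The codeword case is immediate. For $g=\rg_B(x)*\rg_B(x+1)$ there are three situations: if $x+1\in[2^{B-1}]$, both codewords lie in the lower half and $g_{2,B}=\rg_{B-1}(x)*\rg_{B-1}(x+1)$; if $x\notin[2^{B-1}]$, both lie in the upper half and, writing $y=2^B-1-x$, the reflection gives $g_{2,B}=\rg_{B-1}(y)*\rg_{B-1}(y-1)$; in both cases a superposition of consecutive codewords. The remaining boundary case $x=2^{B-1}-1$ is the crucial one: here $\rg_B(x)=0\rg_{B-1}(2^{B-1}-1)$ and $\rg_B(x+1)=1\rg_{B-1}(2^{B-1}-1)$ agree on all but the first bit, so the metastable bit is removed and $g_{2,B}=\rg_{B-1}(2^{B-1}-1)$ collapses to a single codeword.

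The main obstacle is precisely the superposition case in lemmas (i) and (ii), and it is worth emphasizing why a purely combinatorial argument fails: restricting two codewords that differ in a single bit does not in general yield two codewords that are consecutive --- e.g.\ $\rg_4(0)=0000$ and $\rg_4(3)=0010$ differ in exactly one bit yet decode to $0$ and $3$. Hence it is not enough to observe that $g_{i,j}$ carries at most one metastable bit; one must verify that its two Boolean resolutions decode to consecutive integers. The reflected structure is exactly what guarantees this: the ``count down'' in the upper half (the $y\mapsto y-1$ above) and the collapse at the half-way boundary are the features that preserve consecutiveness under bit removal, and tracking them through the recursion is the technical heart of the argument.
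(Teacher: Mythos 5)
Your proof is correct and takes essentially the same route as the paper: the paper derives the observation in one line from Observation~\ref{obs:truncate}, whose own proof is precisely the first-bit/last-bit peeling induction you carry out, exploiting the reflection in the upper half and the collapse at the half-way boundary $x=2^{B-1}-1$. You merely make explicit (via $\rg_B(x)_{1,B-1}=\rg_{B-1}(\lfloor x/2\rfloor)$ and the substitution $y=2^B-1-x$) the local consequence of that global structural fact which the paper leaves implicit, namely that consecutive codewords truncate to equal or consecutive codewords of the shorter code.
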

\fi
\begin{proof}
Follows immediately from Observation~\ref{obs:truncate}.
\end{proof}

%###
\paragraph*{Resolution and Closure}
%###
To extend the specification of $\rgmax$ and $\rgmin$ to valid strings, we
make use of the \emph{metastable closure}~\cite{friedrichs16}, which in turn
makes use of the \emph{resolution.}
\begin{definition}[Resolution~\cite{friedrichs16}]
For $x\in \{0,1,\metas\}^B$,
\begin{equation*}
\res(x):=\{y\!\in \!\{0,1\}^B|\forall i\in \{1,\ldots,B\}\colon x_i\neq
\metas\Rightarrow y_i=x_i\}.
\end{equation*}
\end{definition}
Thus, $\res(x)$ is the set of all strings obtained by replacing all $\metas$s in
$x$ by either $0$ or $1$: $\metas$ acts as a ``wild card.''\\
We note the following for later use.
\begin{observation} For any $x$, $\bigstarr \res(x)=x$. For any $S$,
$S\subseteq \res(\bigstarr S)$.
\end{observation}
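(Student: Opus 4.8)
The plan is to reduce both identities to coordinate-wise statements, exploiting that both the superposition operator $*$ and $\res$ act independently on each of the $B$ positions. Writing $(\bigstarr S)_i$ for the $i$-th coordinate of the superposition, associativity and commutativity of $*$ (noted above) let me regard $(\bigstarr S)_i$ as the superposition $x^{(1)}_i * \cdots * x^{(k)}_i$ of single characters; a trivial induction on $k$ then shows that this equals the common value of the $x^{(j)}_i$ if they all agree, and equals $\metas$ otherwise. This characterization is the only structural fact I need.

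For the first identity $\bigstarr \res(x) = x$, I fix a position $i$ and split on $x_i$. If $x_i \in \{0,1\}$, then by definition of $\res$ every $y \in \res(x)$ satisfies $y_i = x_i$, so the characters being superposed at position $i$ are all equal to $x_i$ and hence $(\bigstarr \res(x))_i = x_i$. If $x_i = \metas$, I exhibit two elements of $\res(x)$ that disagree at $i$: replacing every metastable bit of $x$ by $0$ yields one string $y \in \res(x)$, and the string $y'$ obtained from $y$ by flipping only the $i$-th bit is again in $\res(x)$, since it still agrees with $x$ on all non-metastable positions; as $y_i \neq y'_i$, the superposition at $i$ is $\metas = x_i$. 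Either way $(\bigstarr \res(x))_i = x_i$, and since $i$ was arbitrary, $\bigstarr \res(x) = x$.

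For the second identity, note first that for $S \subseteq \res(\bigstarr S)$ to be meaningful one needs $S \subseteq \{0,1\}^B$ (as $\res$ produces only binary strings), which is exactly the setting in which it is applied. Set $m := \bigstarr S$ and pick any $s \in S$. To show $s \in \res(m)$ I must verify, for each $i$ with $m_i \neq \metas$, that $s_i = m_i$. But $m_i \neq \metas$ means, by the characterization above, that all strings in $S$ share a common value at position $i$, namely $m_i$; in particular $s_i = m_i$. Hence $s \in \res(m)$, and as $s$ was arbitrary, $S \subseteq \res(\bigstarr S)$.

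The computations are entirely routine; the only point that deserves care is the metastable case of the first identity, where one must genuinely produce two resolutions of $x$ differing at the chosen coordinate. This is where the ``wild card'' freedom of $\metas$ under $\res$ is used, and it implicitly relies on $\res(x)$ being nonempty, which holds because a binary value can always be substituted for each $\metas$.
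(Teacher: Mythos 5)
Your proof is correct; the paper in fact states this observation \emph{without} proof, treating it as immediate, and your coordinate-wise verification (superposition at position $i$ equals the common value if all strings agree there and $\metas$ otherwise, plus exhibiting two resolutions differing at a $\metas$ position) is precisely the routine argument the authors implicitly rely on. Your caveat that the second inclusion presupposes $S\subseteq\{0,1\}^B$ is a genuine point of care — it fails for, e.g., $S=\{\metas\}$, since $\res$ produces only binary strings — and it matches how the paper actually invokes the claim, namely only for sets of binary strings (cf.\ the proof of Theorem~\ref{thm:decompose}, where it is applied with $S\subseteq\{0,1\}^2$).
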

The metastable
closure of an operator on binary inputs extends it to inputs that may
contain metastable bits. This is done by considering all resolutions of the
inputs, applying the operator, and taking the superposition of the results.
\begin{definition}[The $\metas$
Closure~\cite{friedrichs16}]\label{def:mcomp} Given an operator $f\colon
\{0,1\}^n\times \{0,1\}^n \to \{0,1\}^n$, its \emph{metastable closure
$f_{\metas}\colon \{0,1,\metas\}^n\times \{0,1,\metas\}^n \to \{0,1,\metas\}^n$}
is defined by
$$f_{\metas}(x):= \bigstarr f(\res(x)).$$
\end{definition}
%###
\paragraph*{Output Specification}
%###
We want to construct a circuit that outputs the maximum and minimum of two valid
strings, which will enable us to build sorting networks for valid strings.
First, however, we need to answer the question what it means to ask for the
maximum or minimum of valid strings. To this end, suppose a valid string is
$\rg_B(x)*\rg_B(x+1)$ for some $x\in [N-1]$, i.e., the string contains a
metastable bit that makes it uncertain whether the represented value is $x$ or
$x+1$. This means that the measurement the string represents was taken of a
value somewhere between $x$ and $x+1$. Moreover, if we wait for metastability to
resolve, the string will stabilize to either $\rg_B(x)$ or $\rg_B(x+1)$.
Accordingly, it makes sense to consider $\rg_B(x)*\rg_B(x+1)$ ``in between''
$\rg_B(x)$ and $\rg_B(x+1)$, resulting in the total order on valid strings given
by Table~\ref{table:validinputs}.\\
The above intuition can be formalized by extending $\rgmax$ and $\rgmin$ to
valid strings using the metastable closure.
\begin{definition}[\cite{date17,async16}]\label{def:twosort}
For $B\in \NN$, a $\twosort(B)$ circuit is specified as follows.
\begin{compactitem}
  \item\textbf{\emph{Input:}} $g,h \in \validrg{B}$\:,
  \item\textbf{\emph{Output:}} $g',h' \in \validrg{B}$\:,
  \item\textbf{\emph{Functionality:}} $g'=\rgmaxM\{g,h\}$,
  $h'=\rgminM\{g,h\}$.
\end{compactitem}
\end{definition}
As shown in~\cite{date17}, this definition indeed coincides with the one given
in~\cite{async16}, and for valid strings $g$ and $h$, $\rgmaxM\{g,h\}$ and
$\rgminM\{g,h\}$ are valid strings, too. More specifically, $\rgmaxM$ and
$\rgminM$ are the $\max$ and $\min$ operators w.r.t.\ the total order on valid
strings shown in Table~\ref{table:validinputs}, e.g.,
\begin{compactitem}
  \item $\rgmaxM\{1001,1000\}=\rg_4(15)=1000$,
  \item $\rgmaxM\{0\metas10,0010\}=\rg_4(3)*\rg_4(4)=0\metas10$,
  \item $\rgmaxM\{0\metas10,0110\}=\rg_4(4)=0110$.
\end{compactitem}

%###
\paragraph*{Computational Model}
%###
We seek to use standard components and combinational logic only. We use the
model of \cite{friedrichs16}, which specifies the behavior of basic gates on
metastable inputs via the metastable closure of their behavior on binary inputs.
For standard implementations of $\ANDD$ and $\ORR$ gates, this assumption is valid: if
$\metas$ represents an arbitrary, possibly time-dependent voltage between logical
$0$ and $1$, an $\ANDD$ gate will still output logical $0$ if the respective other
input is logical $0$. Similarly, an $\ORR$ gate with one input being logical $1$
suppresses metastability at the other input, cf.\ Table~\ref{tab:gates}.

As pointed out in~\cite{date17}, any additional reduction of metastability in
the output necessitates the use of non-combinational masking components (e.g.,
masking registers), analog components, and/or synchronizers, all of which are
outside of our computational model. Moreover, other than the usage of analog
components, these alternatives require to spend additional time, which we avoid
in this paper.

\begin{table}
\centering
\begin{tabular}{c|ccc}
  \diagbox[width=2.4em,height=2.4em]{b}{a} & 0 & 1 & \metas\\ \hline
  0 & 0 & 0 & 0\\
  1 & 0 & 1 & \metas\\
  \metas & 0 & \metas & \metas
\end{tabular}
\qquad
\begin{tabular}{c|ccc}
  \diagbox[width=2.4em,height=2.4em]{b}{a} & 0 & 1 & \metas\\ \hline
  0 & 0 & 1 & \metas\\
  1 & 1 & 1 & 1\\
  \metas & \metas & 1 & \metas
\end{tabular}
\qquad
\begin{tabular}{c|c}
a & $\bar{\text{a}}$\\ \hline
0 & 1\\
1 & 0\\
\metas & \metas
\end{tabular}
\caption{Logical extensions to metastable inputs of AND
(left), OR (center), and an inverter (right).}\label{tab:gates}
\end{table}

\section{Preliminaries on Stable Inputs}\label{sec:encoding}

\ifnum\nopfs=0
We note the following observation for later use. Informally, it states that
removing prefixes and suffixes from the code results in (repetition) of binary
reflected Gray codes.
\begin{observation}\label{obs:truncate}
For $B$-bit binary reflected Gray code, fix $1\leq i<j\leq B$, and consider
the sequence of strings obtained by (i) listing all codewords in ascending order
of encoded values, (ii) replacing each codeword $g$ by $g_{i,j}$, and (iii)
deleting all immediate repetitions (i.e., if two consecutive strings are
identical, keep only one of them). Then the resulting list repeatedly counts
``up'' and ``down'' through the codewords of $(j-i)$-bit binary reflected Gray
code.
\end{observation}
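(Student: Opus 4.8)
The plan is to prove the statement by reducing it, via the recursive structure of reflected Gray code, to two essentially independent truncation phenomena: removing the suffix $g_{j+1,B}$ (governed by the \emph{upper} index $j$) and removing the prefix $g_{1,i-1}$ (governed by the \emph{lower} index $i$). Throughout I write $y\mapsto \rg_m(y)$ for the $m$-bit code, and I call a maximal run of the output list that equals consecutive codewords $\rg_{j-i+1}(t),\rg_{j-i+1}(t+1),\dots$ in increasing (resp.\ decreasing) order an \emph{up-sweep} (resp.\ \emph{down-sweep}); ``counting up and down'' then means that the list decomposes into alternating up- and down-sweeps. (Since $g_{i,j}$ has $j-i+1$ bits, the relevant code is the $(j-i+1)$-bit one.)

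First I handle the upper index through a prefix self-similarity fact: for all $1\le j\le B$ and $x\in[2^B]$, $(\rg_B(x))_{1,j}=\rg_j(\lfloor x/2^{B-j}\rfloor)$. This follows by induction on $B$ directly from the recursive definition: peeling off the leading bit $g_1$ and splitting on whether $x<2^{B-1}$, the two cases of the recursion for $\rg_B$ match the two cases for $\rg_j$ once one verifies the index identity $\lfloor(2^B-1-x)/2^{B-j}\rfloor=2^j-1-\lfloor x/2^{B-j}\rfloor$. Consequently $g_{i,j}=(\rg_j(\lfloor x/2^{B-j}\rfloor))_{i,j}$ depends on $x$ only through $y:=\lfloor x/2^{B-j}\rfloor$, which is non-decreasing in $x$ and attains each value of $[2^j]$ on a block of $2^{B-j}$ consecutive $x$. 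Deleting immediate repetitions therefore collapses each such block, so the output list equals the one obtained from $(\rg_j(y))_{i,j}$ for $y=0,\dots,2^j-1$. This reduces the claim to the case $j=B$, i.e.\ to analyzing the removal of the top $p:=i-1$ bits of the $j$-bit code.

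The core step is then an induction on $p$: the repetition-free list of $(\rg_m(y))_{p+1,m}$ for $y\in[2^m]$ is a traversal of the $(m-p)$-bit code consisting of $2^p$ alternating sweeps. The base case $p=0$ is a single up-sweep through $\rg_m$. For the step I peel off the top bit using the reflection in the recursion: $(\rg_m(y))_{2,m}=\rg_{m-1}(\phi(y))$, where $\phi(y)=y$ for $y<2^{m-1}$ and $\phi(y)=2^m-1-y$ otherwise, so $\phi$ first counts the argument up through $[2^{m-1}]$ and then back down, hitting $2^{m-1}-1$ twice at the turn. Hence $(\rg_m(y))_{p+1,m}=(\rg_{m-1}(\phi(y)))_{p,m-1}$, and the induction hypothesis (applied to $\rg_{m-1}$ with $p-1$ removed bits) gives a zigzag $s$ of $2^{p-1}$ sweeps while $\phi$ ascends; while $\phi$ descends, the same fixed inner map is evaluated in reverse order and hence yields the reverse of $s$. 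Since reversing a zigzag again yields a zigzag and the first sweep of the reversed part has the opposite direction of the last sweep of $s$, the two halves concatenate — with the doubled turning value removed by the repetition deletion — into a single zigzag of $2^{p-1}+2^{p-1}=2^p$ alternating sweeps, completing the induction and, together with the reduction above, the proof.

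I expect the main obstacle to be the bookkeeping of immediate repetitions: one must verify that merges occur \emph{only} within the $2^{B-j}$-blocks, at the turning point $\phi(2^{m-1}-1)$, and at the junctions already internal to each inner zigzag, and that no spurious merge shortens or breaks a sweep — in particular that concatenating a zigzag with its reverse preserves the strict up/down alternation rather than fusing two sweeps into one. Once this is pinned down, the remainder is a direct unwinding of the recursive definition of $\rg_B$.
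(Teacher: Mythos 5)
Your proposal is correct, and it is more detailed than the paper's own (rather terse) argument; the two differ mainly in how the suffix $g_{j+1,B}$ is disposed of. The paper peels off one bit at a time from \emph{both} ends: removing the first bit is immediate from the reflection in the definition, and removing the last bit is justified via the parity identity $\repg{g}=2\repg{g_{1,B-1}}+\XORR(\parity(g_{1,B-1}),g_B)$ together with the fact that the last bit toggles on every second up-count; the general case then follows by iterating, noting that the same arguments apply to a list that already zigzags. You instead kill the entire suffix in one stroke with the self-similarity identity $(\rg_B(x))_{1,j}=\rg_j(\lfloor x/2^{B-j}\rfloor)$, which reduces suffix truncation to collapsing constant blocks of length $2^{B-j}$ and avoids the parity/decoding identity altogether; your treatment of the prefix (peeling the leading bit via the fold map $\phi$ and gluing a zigzag to its reverse at the doubled turning value) is essentially the paper's first-bit argument made explicit and quantitative (you even get the sweep count $2^{i-1}$, which the paper does not state). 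Both routes are sound; yours trades the paper's symmetric ``one bit from either end'' induction for an asymmetric but cleaner decomposition, at the cost of having to verify the floor-index identity $\lfloor(2^B-1-x)/2^{B-j}\rfloor=2^j-1-\lfloor x/2^{B-j}\rfloor$, which you correctly identify and which does hold. Your residual worries about spurious merges are easily discharged: since $i<j$, every sweep runs through all $2^{j-i+1}\geq 4$ codewords of the truncated code, so no sweep degenerates to a single element and an up-sweep cannot fuse with the following down-sweep. Also, you are right that the statement's ``$(j-i)$-bit'' should read ``$(j-i+1)$-bit,'' consistent with Observation~\ref{obs:valid_substrings}.
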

\fi
\begin{proof}
When removing the first bit of $B$-bit binary reflected Gray code, the claim
follows directly from the definition. By induction, we can confirm that the last
bit of $B$-bit code toggles on every second up-count, and
$\repg{g}=2\cdot\repg{g_{1,B-1}}+\XORR(\parity(g_{1,B-1}),g_B)$. Thus, the claim
holds if we either remove the first or last bit. As the same arguments apply
when we have a list counting ``up'' and ``down'' repeatedly, we can inductively
remove the first $i-1$ bits and the last $B-j$ bits to prove the general claim.
\end{proof}

%###
\paragraph*{Comparing Stable Gray Code Strings via an FSM}
%###
The following basic structural lemma leads to a straightforward way of comparing
binary reflected Gray code strings.
\begin{lemma}\label{lem:correctness_binary}
Let $g,h\in \{0,1\}^B$ such that $\repg{g}>\repg{h}$. Denote by $i\in
\{1,\ldots,B\}$ the first index such that $g_i\neq h_i$. Then $g_i=1$ (i.e.,
$h_i=0$) if $\parity(g_{1,i-1})=0$ and $g_i=0$ (i.e., $h_i=1$) if
$\parity(g_{1,i-1})=1$.
\end{lemma}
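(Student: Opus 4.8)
## Proof Plan for Lemma~\ref{lem:correctness_binary}

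The plan is to exploit the key identity already established in the proof of Observation~\ref{obs:truncate}, namely
\[
\repg{g}=2\cdot\repg{g_{1,B-1}}+\XORR(\parity(g_{1,B-1}),g_B).
\]
Unfolding this recursively expresses the decoded value of a prefix in terms of the individual bits: the contribution of bit $g_k$ to $\repg{g}$ is governed by the parity $\parity(g_{1,k-1})$ of the bits preceding it. This is the standard fact that in reflected binary Gray code, bit $k$ is ``counting up'' when $\parity(g_{1,k-1})=0$ and ``counting down'' when $\parity(g_{1,k-1})=1$. I would first state this prefix-value formula cleanly, either by induction on the prefix length using the displayed identity, or by direct appeal to Observation~\ref{obs:truncate} applied to the single-bit window at position $k$.

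Next I would localize the comparison to index $i$. Since $g_{1,i-1}=h_{1,i-1}$ by the choice of $i$ as the first differing index, the two strings share the same prefix and in particular $\parity(g_{1,i-1})=\parity(h_{1,i-1})=:p$. Applying the prefix-value formula, the values $\repg{g}$ and $\repg{h}$ agree on everything determined by bits $1,\ldots,i-1$, so the comparison $\repg{g}>\repg{h}$ is decided by the behavior of the code from position $i$ onward, with shared leading parity $p$. By Observation~\ref{obs:truncate}, the suffix $g_{i,B}$ (respectively $h_{i,B}$) encodes a value in a Gray-code block that is being traversed ``upward'' if $p=0$ and ``downward'' if $p=1$.

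The crux is then a short case analysis on $p=\parity(g_{1,i-1})$. If $p=0$, the code is counting up through the suffix, so a larger decoded value corresponds to the lexicographically larger Gray suffix; since $g_i\neq h_i$, the string with the larger value has its $i$-th bit equal to $1$, giving $g_i=1$. If $p=1$, the code is counting down, the order is reversed, and the larger value corresponds to $g_i=0$. I would make this precise by invoking the displayed identity at level $i$: writing $\repg{g}=2\repg{g_{1,i-1}}\cdot(\ldots)+\XORR(p,g_i)+(\text{lower-order terms bounded below the place value of bit } i)$, and arguing that the sign of $\repg{g}-\repg{h}$ is controlled by $\XORR(p,g_i)-\XORR(p,h_i)$ because the first differing bit dominates all subsequent contributions.

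The main obstacle I anticipate is this last dominance argument: one must confirm that, once bits $1,\ldots,i-1$ agree, the contribution of bit $i$ strictly outweighs the combined contribution of all lower-order bits $i+1,\ldots,B$, so that the value comparison is genuinely decided at position $i$ regardless of how the tails differ. This is the Gray-code analogue of the familiar statement for binary that the most significant differing bit determines the order, and it follows from a geometric-series bound on the place values in the recursive formula; I would handle it either by an explicit bound on $\bigl|\sum_{k>i} (\text{contribution of bit } k)\bigr|$ against the place value of bit $i$, or more cleanly by applying Observation~\ref{obs:truncate} to the window $[i,B]$ to reduce to the one-bit statement at the top of a freshly-started Gray block.
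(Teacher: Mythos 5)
Your route is sound, but it is genuinely different from the paper's. The paper proves the lemma by induction on $B$, peeling off the \emph{first} bit: if $g_1=h_1=0$ the claim reduces directly to the $(B-1)$-bit case, and if $g_1=h_1=1$ the reflection swaps the roles of $g_{2,B}$ and $h_{2,B}$, flips the order, and the parity offset $\parity(g_{1,i-1})=1-\parity(g_{2,i-1})$ restores the statement. You instead go through the Gray-to-binary conversion: unfolding the identity $\repg{g}=2\repg{g_{1,B-1}}+\XORR(\parity(g_{1,B-1}),g_B)$ gives $\repg{g}=\sum_{k=1}^{B}2^{B-k}\,b_k$ with $b_k=\parity(g_{1,k})$, and the comparison becomes an ordinary binary comparison of the strings $b(g)$ and $b(h)$. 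This is a valid and arguably more transparent argument, and the ``dominance'' obstacle you flag at the end is not really an obstacle once the formula is written out: since $g_{1,i-1}=h_{1,i-1}$, the binary digits satisfy $b_k(g)=b_k(h)$ for $k<i$ and $b_i(g)\neq b_i(h)$, so position $i$ is exactly the most significant differing \emph{binary} digit, and $2^{B-i}>\sum_{k>i}2^{B-k}$ settles the comparison; then $\repg{g}>\repg{h}$ forces $b_i(g)=\parity(g_{1,i-1})\oplus g_i=1$, which is precisely the claimed dichotomy. The one piece of homework your plan leaves open is the prefix-value identity itself, which the paper only asserts ``by induction'' inside the proof of Observation~\ref{obs:truncate}; you would need to verify it from the recursive definition of the code (a short induction). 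The trade-off: the paper's induction uses only the reflection structure and no arithmetic, while your approach front-loads one identity and then gets the case analysis essentially for free, avoiding the slightly fiddly role-swap in the $g_1=h_1=1$ case.
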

\begin{proof}
We prove the claim by induction on $B$, where the base case $B=1$ is trivial.
Now consider $B$-bit strings for some $B>1$ and assume that the claim holds for
$B-1$ bits. If $i=1$, again the claim trivially follows from the definition. If
$i>1$, we have that $g_1=h_1$. Denote $x=\repg{g}$ and $y=\repg{h}$. If
$g_1=h_1=0$, then $g_{2,B}=\rg_{B-1}(x)$ and $h_{2,B}=\rg_{B-1}(y)$. Thus, as
$x>y$ by assumption, the claim follows from the induction hypothesis. If
$g_1=h_1=1$, $g_{2,B}=\rg_{B-1}(2^B-1-x)$ and $h_{2,B}=\rg_{B-1}(2^B-1-y)$. Note
that $g':=h_{2,B}$ and $h':=g_{2,B}$ satisfy that $\repg{g'}>\repg{h'}$ and that
their first differing bit is $i-1$. By the induction hypothesis, we have that
$h'_{i-1}=g_i=1$ if
$\parity(g'_{1,i-2})=\parity(h'_{1,i-2})=\parity(g_{2,i-1})=1$ and,
accordingly, $g_i=0$ if $\parity(g_{2,i-1})=0$. As $g_1=1$,
$\parity(g_{1,i-1})=1-\parity(g_{2,i-1})$, and the claim follows.
\end{proof}

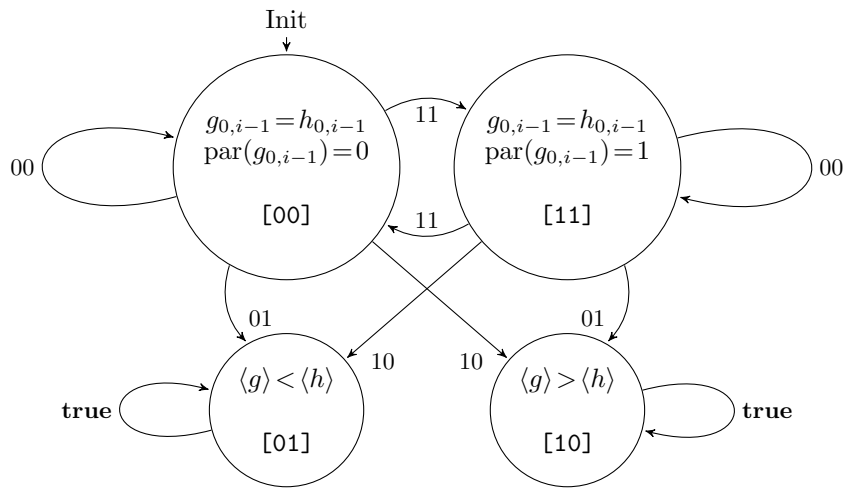
\begin{figure}
  \centering
\begin{tikzpicture}
%[shorten >=1pt,node distance=5cm,on grid,auto]
[>=stealth',shorten >=1pt,auto,node distance=0.7cm, initial above,initial text=Init,scale=0.5]
   \node[state,initial,align=center] (00)   {$g_{0,i-1}\!=\!h_{0,i-1}$\\$\parity(g_{0,i-1})\!=\!0$\\\\\texttt{[00]}};
   \node[state] (11) [right=of 00,align=center] {$g_{0,i-1}\!=\!h_{0,i-1}$\\$\parity(g_{0,i-1})\!=\!1$\\\\\texttt{[11]}};
   \node[state] (01) [below =of 00,align=center] {$\repg{g}\!<\!\repg{h}$\\\\\texttt{[01]}};
   \node[state](10) [below =of 11,align=center] {$\repg{g}\!>\!\repg{h}$\\\\\texttt{[10]}};

   \begin{scope}[every node/.style={scale=.9}]
    \path[->]
    (00) edge [bend left,swap,align=center]node {$11$} (11)
          edge [loop left,align=center] node {$00$} ()
           edge  node [swap,very near end,align=center] {$10$}  (10)
             edge  [bend right,align=center,very near end] node {$01$} (01)
    (01) edge  [loop left] node  {$\textbf{true}$} ()
    (10) edge  [loop right] node  {$\textbf{true}$} ()
    (11) edge  [bend left,align=center] node [swap] {$11$} (00)
          edge  node [near start,align=center,very near end]  {$10$} (01)
           edge  [bend left,align=center,swap, very near end] node {$01$} (10)
           edge [loop right,align=center] node {$00$} ();
    \end{scope}
\end{tikzpicture}
  \caption{Finite state automaton determining which of two Gray code
  inputs $g,h\in \{0,1\}^B$ is larger. In each step, the machine receives
  $g_ih_i$ as input. State encoding is given in square brackets.}\label{fig:fsm}
\end{figure}

Lemma~\ref{lem:correctness_binary} gives rise to a sequential representation of
$\twosort(B)$ as a Finite state machine (FSM), for input strings in $\{0,1\}^B$.
Consider the state machine given in Figure~\ref{fig:fsm}. Its four states keep
track of whether $g_{1,i}=h_{1,i}$ with parity $0$ (state encoding: $00$) or $1$
(state encoding: $11$), respectively, $\repg{g}<\repg{h}$ (state encoding:
$01$), or $\repg{g}>\repg{h}$ (state encoding: $10$). Denoting by $s^{(i)}$ its
state after $i$ steps (where $s^{(0)}=00$ is the initial state),
Lemma~\ref{lem:correctness_binary} shows that the output given in
Table~\ref{tab:output} is correct: up to the first differing bits $g_i\neq h_i$,
the (identical) input bits are reproduced both for $\rgmax$ and $\rgmin$, and in
the $i$-th step the state machine transitions to the correct absorbing state.

\begin{table}
\centering
\begin{tabular}{|c||c |c |c |c |}
\hline \rule{0pt}{9pt}
  $s^{(i-1)}$ & $\rgmax\{g,h\}_i$ & $\rgmin\{g,h\}_i$\\ \hline \hline \rule{0pt}{8pt}
    	  00 & $\max\{g_i,h_i\}$ & $\min\{g_i,h_i\}$\\
		  10 & $g_i$ & $h_i$\\
  		  11 & $\min\{g_i,h_i\}$ & $\max\{g_i,h_i\}$\\
	      01 & $h_i$ & $g_i$\\
  \hline
\end{tabular}
\caption{Computing $\rgmax\{g,h\}_i$ and $\rgmin\{g,h\}_i$.}
\label{tab:output}
\end{table}

%###
\paragraph*{The $\diamond$ Operator and Optimal Sorting of Stable Inputs}
%###
We can express the transition function of the state machine as an operator
$\diamond$ taking the current state and input $g_ih_i$ as argument and returning
the new state. Then $s^{(i)}=s^{(i-1)}\diamond g_ih_i$, where $\diamond$ is
given in Table~\ref{tab:diamond}.

\begin{table}
\centering
\begin{tabular}{|c||c |c |c |c |}
\hline
  $\diamond$ & 00 & 01 & 11 & 10\\ \hline \hline
    	  00 & 00 & 01 & 11 & 10 \\
	      01 & 01 & 01 & 01 & 01 \\
  		  11 & 11 & 10 & 00 & 01 \\
		  10 & 10 & 10 & 10 & 10 \\
  \hline
\end{tabular}
\hspace{30pt}
\begin{tabular}{|c||c |c |c |c |}
\hline
  $\outt$ & 00 & 01 & 11 & 10\\ \hline \hline
       00 & 00 & 10 & 11 & 10 \\
	   01 & 00 & 10 & 11 & 01 \\
  	   11 & 00 & 01 & 11 & 01 \\
	   10 & 00 & 01 & 11 & 10 \\
  \hline
\end{tabular}
\caption{The $\diamond$ operator and the $\outt$ operator. The first operand is the current state, the
second the next input bits.}\label{tab:diamond}\label{tab:outt}
\end{table}

\begin{observation}\label{obs:associative}
$\diamond$ is associative, that is,
$$\forall a,b,c\in \{0,1\}^2\colon (a\diamond b)\diamond c=a\diamond (b\diamond c).$$
We thus have that
$$s^{(i)}=\bigdiamond_{j=1}^{i}g_jh_j:=g_1h_1\diamond g_2h_2\diamond\ldots\diamond g_ih_i,$$
regardless of the order in which the $\diamond$ operations are
applied.
\end{observation}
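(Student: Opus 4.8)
The plan is to prove associativity of $\diamond$ by exploiting the state-machine semantics underlying Table~\ref{tab:diamond} rather than mechanically checking all $64$ triples. First I would recast the claim in terms of right-multiplication maps: for each input $b\in\{0,1\}^2$ define a map $R_b$ on the four states by $R_b(s):=s\diamond b$. Associativity $(a\diamond b)\diamond c=a\diamond(b\diamond c)$ is then exactly the identity $R_c\circ R_b=R_{b\diamond c}$ holding as functions, for every pair $b,c$. This reduces the task from inspecting triples to understanding four maps and their pairwise compositions.

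Next I would read off three structural facts straight from Table~\ref{tab:diamond}. First, rows $01$ and $10$ are constant, so $01$ and $10$ are absorbing: every $R_b$ fixes them, and moreover the images of $R_{01}$ and $R_{10}$ are contained in $\{01,10\}$. Second, row $00$ is the identity row, i.e.\ $R_{00}=\mathrm{id}$ and $00\diamond c=c$ for all $c$ (the state $00$ is a left identity). Third, $R_{11}$ swaps the two ``undecided'' states $00\leftrightarrow 11$ while fixing the absorbing states, and hence $R_{11}\circ R_{11}=R_{00}$.

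With these facts the verification of $R_c\circ R_b=R_{b\diamond c}$ collapses. When $b\in\{01,10\}$, the image of $R_b$ lies in the absorbing set, which every $R_c$ fixes, so $R_c\circ R_b=R_b$; this matches $b\diamond c=b$ coming from the absorbing rows. When $b=00$ we get $R_c\circ R_{00}=R_c=R_{00\diamond c}$. The only genuinely nontrivial case is $b=11$, and here it suffices to track the images of the two undecided states through the four subcases $c\in\{00,01,11,10\}$; for instance $R_{01}\circ R_{11}$ sends $00\mapsto 11\mapsto 10$ and $11\mapsto 00\mapsto 01$, which is exactly $R_{10}$, matching $11\diamond 01=10$, and $R_{11}\circ R_{11}=\mathrm{id}=R_{00}$ matches $11\diamond 11=00$.

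Finally, once $\diamond$ is known to be associative, the well-definedness of the iterated product $\bigdiamond_{j=1}^{i}g_jh_j$, and hence the asserted equality $s^{(i)}=\bigdiamond_{j=1}^{i}g_jh_j$ independent of the order of evaluation, is the standard consequence of associativity for a binary operation. I expect the main obstacle to be purely organizational rather than conceptual: there is no deep difficulty, only the danger of an arithmetic slip while reading the transition table, and the payoff of the map-composition viewpoint is precisely that it prunes the analysis down to the single nontrivial case $b=11$, whose four subcases are checked by hand in a line each.
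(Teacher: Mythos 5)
Your proof is correct, but it takes a route dual to the paper's. The paper fixes the \emph{first} operand $x$ and exploits the structure of left-multiplication read off from the rows of Table~\ref{tab:diamond}: $00\diamond y=y$ (left identity), $01\diamond y=01$ and $10\diamond y=10$ (left-absorbing), and $11\diamond y=\bar{y}$ (complementation), so that the only nontrivial case is $x=11$, where associativity reduces to the identity $\bar{y}\diamond z=\overline{y\diamond z}$, verified against the table. You instead fix the \emph{middle} operand $b$ and work with the right-multiplication maps $R_b(s)=s\diamond b$ read off from the columns, recasting associativity as $R_c\circ R_b=R_{b\diamond c}$; your structural facts ($R_{00}=\mathrm{id}$, $R_{01}$ and $R_{10}$ mapping into the set $\{01,10\}$ of common fixed points, $R_{11}$ the involution swapping $00\leftrightarrow 11$) are the column-wise counterparts of the paper's row-wise ones, and your single nontrivial case $b=11$ mirrors the paper's $x=11$. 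The two arguments are of essentially equal length and both prune the $64$ triples down to one small table check; what your framing buys is that the residual check is a composition of permutations/constant maps tracked on only the two undecided states, and that the absorbing-state cases become one-line fixed-point arguments, whereas the paper's framing makes the cases $x\in\{01,10\}$ entirely immediate but leaves a $16$-entry verification of $\bar{y}\diamond z=\overline{y\diamond z}$. Your closing remark that generalized associativity (independence of parenthesization of $\bigdiamond_{j=1}^{i}g_jh_j$) follows by the standard argument matches the paper, which likewise leaves that step implicit.
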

\begin{proof}
First, we observe the following for every $y \in \{0,1\}^2$: (1)~$00 \diamond y
= y$, (2)~$01 \diamond y = 01$, (3)~$11 \diamond y = \bar{y}$, and (4)~$10
\diamond y = 10$. We prove that $\diamond$ is associative by considering these
four cases for the first operand $x$. If $x \in \{01,10\}$, associativity
follows from the ``absorbing'' property of cases $(2)$ and $(4)$.
If $x=00$, then $(00 \diamond y) \diamond z = y \diamond z = 00 \diamond (y
\diamond z)$. We are left with the case that $x=11$. Then the LHS
equals $\bar{y} \diamond z$, while the RHS equals $\overline{y \diamond z}$.
Checking Table~\ref{tab:diamond}, one can directly verify that $\bar{y} \diamond
z=\overline{y \diamond z}$ in all cases.
\end{proof}

An immediate consequence is that we can apply the results
by~\cite{ladner1980parallel} on parallel prefix computation to derive an
$O(B)$-gate circuit of depth $O(\log B)$ computing all $s_i$, $i\in [B]$, in
parallel. Our goal in the following sections is to extend this well-known
approach to potentially metastable inputs.

\section{\sloppy Dealing with Metastable Inputs}\label{sec:optimal}

Our strategy is the same as outlined in Section~\ref{sec:encoding} for stable
inputs, where we replace all involved operators by their metastable closure:
(i) compute $s^{(i)}$ for $i\in [B]$, (ii) determine $\rgmax\{g,h\}_i$ and
$\rgmin\{g,h\}_i$ according to Table~\ref{tab:output} for $i\in \{1,\ldots,B\}$,
and (iii) exploit associativity of the operator computing the $s^{(i)}$ to
determine all of them concurrently with $O(\log B)$ depth and $O(B)$ gates
(using~\cite{ladner1980parallel}). To make this work for inputs that are valid
strings, we simply replace all involved operators by their respective metastable
closure. Thus, we only need to implement $\diamond_{\metas}$ and the
closure of the operator given in Table~\ref{tab:output} (both of
constant size) and immediately obtain an efficient circuit using the PPC
framework~\cite{ladner1980parallel}.

Unfortunately, it is not obvious that this approach yields correct outputs.
There are three hurdles to take:
\begin{enumerate}[(i)]
\item Show that first computing $s^{(i)}_{\metas}$ and then the output from this
and the input yields correct output for all valid strings.
\item Show that $\diamond_{\metas}$ behaves like an associative operator on the
given inputs (so we can use the PPC framework).
\item Show that repeated application of $\diamond_{\metas}$ actually computes
$s^{(i)}_{\metas}$.
\end{enumerate}

Killing two birds with one stone, we first show the second and third point in a
single inductive argument. We then proceed to prove the first point.

%###
\subsection{Determining $s^{(i)}_{\metas}$}
%###
Note that for any $x$ and $y$, we have that $\res(xy)=\res(x)\times \res(y)$.
Hence, for valid strings $g,h\in \validrg{B}$ and $i\in \{1,\ldots,B\}$, we have that
$$s^{(i)}_{\metas}=\bigstarr \bigdiamond_{j=1}^i \res(g_jh_j),$$
and for convenience set $s^{(0)}_{\metas}:=s^{(0)}=00$. Moreover, recalling
Definition~\ref{def:mcomp},
\begin{align}
x\diamond_{\metas} y = \bigstarr_{x'y'\in \res(xy)}\{x'\diamond y'\} = \bigstarr (\res(x)\diamond \res(y)).\label{eq:diamondM}
\end{align}
The following theorem shows that the desired decomposition is feasible.
\begin{theorem}\label{thm:decompose}
Let $g,h\in \validrg{B}$ and $1\leq i\leq j\leq B$. Then
\begin{align}\label{eq:claimed_equality}
g_ih_i\diamond_{\metas}g_{i+1}h_{i+1}
\diamond_{\metas}\ldots\diamond_{\metas}g_jh_j
&= \bigstarr \bigdiamond_{k=i}^j \res(g_kh_k),
\end{align}
regardless of the order in which the $\diamond_{\metas}$ operators are applied.
\end{theorem}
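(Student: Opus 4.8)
The plan is to prove the claimed equality \eqref{eq:claimed_equality} by induction on the length $\ell := j-i+1$ of the segment, simultaneously establishing that the left-hand side is well-defined independent of the evaluation order. The key structural fact I want to exploit is the distributivity-like identity \eqref{eq:diamondM}, namely $x\diamond_{\metas} y = \bigstarr(\res(x)\diamond\res(y))$, which converts a single $\diamond_{\metas}$ step into a superposition over all resolutions. The real content is that superpositions of $\diamond$-products interact well: I will need a lemma stating that for any strings $u,v$,
\begin{equation*}
\bigl(\bigstarr U\bigr)\diamond_{\metas}\bigl(\bigstarr V\bigr)
= \bigstarr\{\,u\diamond v \mid u\in\res(\bigstarr U),\ v\in\res(\bigstarr V)\,\}.
\end{equation*}
Combined with the observation $S\subseteq\res(\bigstarr S)$ and $\bigstarr\res(x)=x$, this lets me "push" the outer $\bigstarr$ through repeated applications of $\diamond_{\metas}$.

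First I would handle the base case $\ell=1$: here the left-hand side is just $g_ih_i$ (a single term, so no operator is applied), and the right-hand side is $\bigstarr\res(g_ih_i)=g_ih_i$ by the observation that $\bigstarr\res(x)=x$; these agree. For the inductive step, I would split the segment as $[i,j-1]$ followed by the last term $g_jh_j$ (order-independence will be argued separately, see below), apply the induction hypothesis to rewrite the prefix as $\bigstarr\bigdiamond_{k=i}^{j-1}\res(g_kh_k)$, and then compute one more $\diamond_{\metas}$ step. Using \eqref{eq:diamondM} and the superposition lemma above, the last step becomes
\begin{equation*}
\Bigl(\bigstarr\bigdiamond_{k=i}^{j-1}\res(g_kh_k)\Bigr)\diamond_{\metas} g_jh_j
= \bigstarr\Bigl(\res\bigl(\textstyle\bigstarr\bigdiamond_{k=i}^{j-1}\res(g_kh_k)\bigr)\diamond\res(g_jh_j)\Bigr),
\end{equation*}
and I would argue that replacing the resolution set $\res(\bigstarr\cdot)$ by the (possibly smaller) set $\bigdiamond_{k=i}^{j-1}\res(g_kh_k)$ of actually attained states does not change the superposition. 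This is the crux: a priori $\res(\bigstarr S)$ can strictly contain $S$, so I must show the "extra" resolutions contribute nothing new to the final $\bigstarr$. I expect this to follow from the concrete structure of $\diamond$ — in particular the case analysis in the proof of Observation~\ref{obs:associative}, where $00\diamond y=y$, $11\diamond y=\bar y$, and $01,10$ are absorbing — which constrains which states the prefix product can occupy and how they propagate.

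The main obstacle, then, is this set-collapse argument showing $\bigstarr(\res(\bigstarr S)\diamond T)=\bigstarr(S\diamond T)$ for the relevant $S$; this is exactly the place where naively treating $\diamond_{\metas}$ as the closure of an associative operator could fail, because closures of associative operators need not themselves be associative on superposed inputs. I would isolate it as a standalone lemma about $\diamond$ and verify it by the four-case analysis on the first operand. Finally, order-independence of the left-hand side: since I have shown every evaluation order equals the single canonical expression $\bigstarr\bigdiamond_{k=i}^j\res(g_kh_k)$ on the right (whose value manifestly does not depend on any bracketing, by associativity of $\diamond$ from Observation~\ref{obs:associative} and of $*$ from the earlier observation), order-independence is an immediate corollary rather than a separate induction — which is the sense in which this single argument kills hurdles (ii) and (iii) together.
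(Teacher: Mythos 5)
Your plan is sound and reaches the correct conclusion, but by a genuinely different route from the paper. The paper's induction also rewrites the expression around the last-evaluated $\diamond_{\metas}$, yet it never proves your general ``set-collapse'' lemma; instead it exploits the structure of \emph{valid strings}: its Observation~\ref{obs:mm} characterizes exactly when a partial product superposes to $\metas\metas$ (namely, when there is an index $m$ with $g_m=h_m=\metas$ and $g_{i,m}=h_{i,m}$), and the induction step is a three-way case analysis on whether the right-hand side of \eqref{eq:claimed_equality}, respectively the partial product $b$, equals $\metas\metas$. Only in the remaining case does the paper argue that $\res(a)$ coincides with the set of actually attained states, and there the two sets agree for the cheap cardinality reason you also cite ($S\subseteq\res(\bigstarr S)$, $|\res(\bigstarr S)|\le 2$, and $|S|\ge 2$ unless $\bigstarr S$ is binary). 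Your route instead isolates $\bigstarr\bigl(\res(\bigstarr S)\diamond T\bigr)=\bigstarr(S\diamond T)$ (and its mirror image) as a standalone lemma about arbitrary nonempty $S,T\subseteq\{0,1\}^2$. That lemma is in fact true for this particular $\diamond$: spurious resolutions arise only when $\bigstarr S=\metas\metas$, in which case $S$ must contain $\{00,11\}$ or $\{01,10\}$, forcing $\bigstarr(S\diamond T)=\metas\metas$ on both sides anyway; and on the right operand each row of $\diamond$ is either constant or one of the bijections $y\mapsto y$, $y\mapsto\bar{y}$, which commute with taking superpositions. Carried out, your argument shows that $\diamond_{\metas}$ is associative on \emph{all} of $\{0,1,\metas\}^2$, which is strictly more than the paper establishes (the paper explicitly refrains from claiming this), at the price of a somewhat larger case analysis and without needing the valid-string hypothesis for this part of the proof.

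One loose end you should tighten: as written, your inductive step only splits the segment as $[i,j-1]$ followed by $g_jh_j$, which covers only those evaluation orders whose \emph{last} applied operator sits between positions $j-1$ and $j$; the concluding claim that ``every evaluation order equals the canonical expression'' does not follow from that induction alone, and this matters precisely because the closure of an associative operator need not be associative. Two fixes are available: either let the last-applied operator sit at an arbitrary position $\ell$ and apply the induction hypothesis to both halves (this is what the paper does), or note that your set-collapse lemma already yields genuine three-element associativity of $\diamond_{\metas}$, from which order-independence follows by the standard generalized-associativity argument. Either way the gap is minor given the lemma you already plan to prove.
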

\begin{proof}
We start with a key observation.
\begin{observation}\label{obs:mm}
Let $g,h\in \validrg{B}$ and $1\leq i\leq j\leq B$. If
\begin{align*}
\bigstarr \bigdiamond_{k=i}^j \res(g_kh_k)=\metas\metas,
\end{align*}
there is an index $m$ such that $g_m=h_m=\metas$ and $g_{i,m}=h_{i,m}$.
Conversely, if there is no such index, then $x\neq \metas \metas$.
\end{observation}
\begin{proof}
Abbreviate $x=\bigstarr \bigdiamond_{k=i}^j \res(g_kh_k)$.
By Observation~\ref{obs:valid_substrings}, w.l.o.g.\ $i=1$ and $j=B$.
Recall that, for any resolutions $g'\in \res(g)$ and $h'\in \res(h)$,
$\bigdiamond_{k=1}^{B} g_k'h_k'$ indicates whether $\repg{g'}>\repg{h'}$ ($10$), $\repg{g'}<\repg{h'}$
($01$), $g'=h'$ with $\parity(g')=0$ ($00$), or $g'=h'$ with $\parity(g')=1$
($11$). For $x=\metas \metas$, we must have that there are two pairs of
resolutions $g'$, $h'$ that result in (i) outputs $00$ and $11$, respectively,
or (ii) in outputs $01$ and $10$, respectively. It is straightforward to see
that this entails the claim (cf.~Table~\ref{table:validinputs}).
\end{proof}

We now prove the claim of the theorem by induction on $j-i+1$, i.e., the length of
the strings we feed to the operators. For $j=i$, we trivially have
$g_ih_i=\bigstarr \res(g_ih_i)$.

For the induction step, suppose $j>i$ and the claim holds for all shorter valid
strings. As, by Observation~\ref{obs:valid_substrings}, $g_{i,j}$ and
$h_{i,j}$ are valid strings, w.l.o.g.\ $i=1$ and $j=B$. Consider the
$\diamond_{\metas}$ operator (at the position between index $\ell$ and $\ell+1$)
on the left hand side that is evaluated last; we indicate this by parenthesis
and compute
\begin{align*}
&(g_1h_1\diamond_{\metas}\ldots\diamond_{\metas}g_{\ell}h_{\ell})\diamond_{\metas}
(g_{\ell+1}h_{\ell+1}\diamond_{\metas}\ldots\diamond_{\metas}g_Bh_B)\\
=\, &\left(\bigstarr \bigdiamond_{k=1}^{\ell}
\res(g_kh_k)\right)\diamond_{\metas}
\left(\bigstarr \bigdiamond_{k=\ell+1}^B \res(g_kh_k)\right)\\
\stackrel{\eqref{eq:diamondM}}{=} & \bigstarr \left(\res\left(\bigstarr
\bigdiamond_{k=1}^{\ell} \res(g_kh_k)\right)\diamond
\res\left(\bigstarr \bigdiamond_{k=\ell+1}^B \res(g_kh_k)\right)\right)\\
=\,&\bigstarr \left(\res(a)\diamond \res(b)\right)=:x,
\end{align*}
where $a=\bigstarr \bigdiamond_{k=1}^{\ell}\res(g_kh_k)$ and
$b=\bigstarr \bigdiamond_{k=\ell+1}^B \res(g_kh_k)$.

By the induction hypothesis, $a$ and $b$ do not depend on the order of
evaluation of the $\diamond_{\metas}$ operators. Thus, it suffices to show that
$x$ equals the right hand side of Equality~\eqref{eq:claimed_equality}.

We distinguish three cases. The first is that the right hand side of
\eqref{eq:claimed_equality} evaluates to $\metas \metas$. Then, by
Observation~\ref{obs:mm}, there is a (unique) index $m$ so that $g_m=h_m=\metas$
and $g_{1,m}=h_{1,m}$. If $m\leq \ell$, we have (again by
Observation~\ref{obs:mm}) that $a=\metas \metas$, i.e., $\res(a)=\{0,1\}^2$.
Checking Table~\ref{tab:diamond}, we see that each column contains both $01$ and
$10$. Hence, regardless of $b$, $x=\metas\metas$. On the other hand, if
$m>\ell$, then $a\in \{00,11\}$ and $b=\metas \metas$. Checking the $00$
and $11$ rows of Table~\ref{tab:diamond}, both of them contain $01$ and $10$,
implying that $x=\metas\metas$.

The second case is that the right hand side of~\eqref{eq:claimed_equality} does
not evaluate to $\metas \metas$, but $b=\metas\metas$. Then, by
Observation~\ref{obs:mm} and the fact that $g$ and $h$ are valid strings,
$g_{1,\ell},h_{1,\ell}\in \{0,1\}^\ell$ and $g_{1,\ell}\neq h_{1,\ell}$.
W.l.o.g., assume $\repg{g_{1,\ell}}> \repg{h_{1,\ell}}$. Then
$a=g_{1,\ell}\diamond h_{1,\ell}=10$ and the state machine given in
Figure~\ref{fig:fsm} determines output $10$ for inputs $g'\in \res(g)$ and
$h'\in \res(h)$. As the FSM outputs $\bigdiamond_{k=1}^B g_k'h_k'$, we conclude
that
\begin{align*}
\bigstarr \bigdiamond_{k=1}^B \res(g_kh_k)
&=\bigstarr_{g'h'\in \res(gh)}\left\{\bigdiamond_{k=1}^B g'_kh'_k \right\}\\
&=\bigstarr_{g'h'\in \res(gh)}\{01\}=01
\end{align*}
as well. Checking the $10$ row of Table~\ref{tab:diamond}, we see that $x=10$,
too, regardless of $b$.

The third case is that the right hand side of~\eqref{eq:claimed_equality} does
not evaluate to $\metas \metas$ and $b\neq \metas\metas$. By Observation~\ref{obs:mm}, also $a\neq
\metas \metas$. Accordingly, $|\res(a)|,|\res(b)|\in \{1,2\}$. We claim that this implies that
\begin{align*}
\res(a)&=\bigdiamond_{k=1}^{\ell}\res(g_kh_k), \quad
\res(b)=\bigdiamond_{k=\ell+1}^{B}\res(g_kh_k).
\end{align*}
This can be seen by noting that, for any set $S\subseteq \{0,1\}^2$, (i)
$\res(\bigstarr S)\supseteq S$ and (ii) $|\res(\bigstarr S)|=2$ necessitates
that $|S|\geq 2$, as otherwise $\bigstarr S\in \{0,1\}^2$ and thus
$|\res(\bigstarr S)|=1$. We conclude that
\begin{align*}
x&=\bigstarr \left(\res(a)\diamond \res(b)\right)\\
&=\bigstarr \left(\left(\bigdiamond_{k=1}^{\ell}\res(g_kh_k)\right)\diamond
\left(\bigdiamond_{k=\ell+1}^{B}\res(g_kh_k)\right)\right)\\
&=\bigstarr_{\substack{g'\in \res(g)\\h'\in \res(h)}}
\left(\left(\bigdiamond_{k=1}^{\ell}g_k'h_k'\right)\diamond
\left(\bigdiamond_{k=\ell+1}^{B}g_k'h_k'\right)\right)\\
&=\bigstarr_{\substack{g'\in \res(g)\\h'\in \res(h)}}
\left(\bigdiamond_{k=1}^{B}g_k'h_k'\right)=\bigstarr \left(\bigdiamond_{k=1}^B \res(g_kh_k)\right),
\end{align*}
as desired.
\end{proof}

We remark that we did \emph{not} prove that $\diamond_{\metas}$ is an
associative operator, just that it behaves associatively when applied to input
sequences given by valid strings.  Moreover, in general the closure of an
associative operator needs not be associative. A counter-example is given by
binary addition modulo $4$:
\begin{align*}
(0\metas+_{\metas}01)+_{\metas}01 = \metas\metas \neq 1\metas = 0\metas
+_{\metas} (01+_{\metas}01).
\end{align*}
%Fortunately, we are content with computing $s^{(i)}_{\metas}$ for $i\in [B]$,
%i.e., only the behavior of $\diamond_{\metas}$ on valid strings is of relevance
%to us. 
Since $\diamond_{\metas}$ behaves associatively when applied to input
sequences given by valid strings, we can apply the results by~\cite{ladner1980parallel} on
parallel prefix computation to any implementation of $\diamond_{\metas}$.

%###
\subsection{Obtaining the Outputs from $s^{(i)}_{\metas}$}\label{sec:outputdet}
%###
Denote by $\outt\colon \{0,1\}^2\times \{0,1\}^2\to \{0,1\}^2$ the
operator given in Table~\ref{tab:output} computing $\rgmax\{g,h\}_i
\rgmin\{g,h\}_i$ out of $s^{(i-1)}$ and $g_ih_i$. The following theorem shows
that, for valid inputs, it suffices to implement $\outt_{\metas}$ to determine
$\rgmaxM\{g,h\}_i$ and $\rgminM\{g,h\}_i$ from $s^{(i-1)}_{\metas}$, $g_i$, and
$h_i$.
\begin{theorem}\label{thm:out}
Given valid inputs $g,h\in \validrg{B}$ and $i\in [B]$, it holds that
$$\outt_{\metas}(s^{(i-1)}_{\metas},g_ih_i) = \rgmaxM\{g,h\}_i \rgminM\{g,h\}_i.$$
\end{theorem}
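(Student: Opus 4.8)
The plan is to rewrite both sides as superpositions of $\outt$ over \emph{sets of (state, input-bit) pairs} and to show that these two sets induce the same superposition. Writing $s'(g',h'):=\bigdiamond_{k=1}^{i-1}g'_kh'_k$ for the stable state reached by a resolution $g'\in\res(g)$, $h'\in\res(h)$, the correctness of Table~\ref{tab:output} (established via Lemma~\ref{lem:correctness_binary}) gives $\outt(s'(g',h'),g'_ih'_i)=\rgmax\{g',h'\}_i\rgmin\{g',h'\}_i$, so that
\[
\rgmaxM\{g,h\}_i\rgminM\{g,h\}_i=\bigstarr_{(t,cd)\in T}\outt(t,cd),\qquad T:=\{(s'(g',h'),g'_ih'_i)\mid g'\in\res(g),\,h'\in\res(h)\}.
\]
On the other hand, setting $U:=\res(s^{(i-1)}_\metas)\times\res(g_ih_i)$, Definition~\ref{def:mcomp} gives directly $\outt_\metas(s^{(i-1)}_\metas,g_ih_i)=\bigstarr_{(t,cd)\in U}\outt(t,cd)$. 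Since each $s'(g',h')\in\res(s^{(i-1)}_\metas)$ and each $g'_ih'_i\in\res(g_ih_i)$, we have $T\subseteq U$, so the left-hand side is \emph{at least} as metastable as the right-hand side; the whole content of the theorem is the reverse, namely that the extra pairs in $U\setminus T$ introduce no new metastability, i.e.\ $\bigstarr_{U}\outt=\bigstarr_{T}\outt$.

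I would split on whether the input bits are metastable. If $g_i=\metas$ or $h_i=\metas$, say $h_i=\metas$, then—because a valid string carries at most one $\metas$—the prefix $h_{1,i-1}$ is \emph{stable}, so all state-metastability stems from $g_{1,i-1}$ alone. A short analysis of a single metastable prefix (using that $g_{1,i-1}$ is itself a valid string, Observation~\ref{obs:valid_substrings}) shows that its two resolutions yield two states differing in at most one bit, so the set of reachable states equals $\res(s^{(i-1)}_\metas)$ and is determined by $g$'s resolution only, independently of the resolution of $h_i$. Thus the reachable (state, input) pairs already form the full product $\res(s^{(i-1)}_\metas)\times\res(g_ih_i)$, i.e.\ $T=U$, and equality is immediate.

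The interesting case is $g_i,h_i\in\{0,1\}$, where $\res(g_ih_i)=\{g_ih_i\}$ and the two sides can differ only in their state components: $U$ ranges over all of $\res(s^{(i-1)}_\metas)$, while $T$ ranges over the set $\mathcal A\subseteq\res(s^{(i-1)}_\metas)$ of \emph{actually reachable} states. Inspecting the possible values of $s^{(i-1)}_\metas$, a strict inclusion $\mathcal A\subsetneq\res(s^{(i-1)}_\metas)$ can occur only when $s^{(i-1)}_\metas=\metas\metas$ (for every other metastable value the constraint $\bigstarr\mathcal A=s^{(i-1)}_\metas$ already forces $\mathcal A=\res(s^{(i-1)}_\metas)$). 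I would then argue that whenever not all four states are reachable, $00$ and $11$ cannot both lie in $\mathcal A$: witnesses for $00$ and $11$ would be resolutions yielding equal prefixes $P^a=g^a_{1,i-1}=h^a_{1,i-1}$ and $P^b=g^b_{1,i-1}=h^b_{1,i-1}$ of opposite parity, so $P^a\neq P^b$ differ at some position $r$ that is metastable in both $g$ and $h$; recombining the resolutions (taking $g$ as in one witness and $h$ as in the other, and vice versa) then yields $g'\neq h'$ realizing both $01$ and $10$, hence all four states—a contradiction. Consequently $\{01,10\}\subseteq\mathcal A$ and $\res(\metas\metas)\setminus\mathcal A\subseteq\{00,11\}$, and it remains to read off from Table~\ref{tab:outt} that for every stable $cd$ the outputs $\outt(00,cd)$ and $\outt(11,cd)$ each coincide with $\outt(01,cd)$ or $\outt(10,cd)$. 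Hence the spurious states $00,11$ contribute nothing new, giving $\bigstarr_U\outt=\bigstarr_T\outt$.

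The main obstacle is precisely this last, input-stable case, where the metastable closure decorrelates the state from the input and $\res(s^{(i-1)}_\metas)$ genuinely over-approximates the reachable states. The crux is the recombination argument that pins down $\mathcal A$ together with the finite check on the $\outt$ table; the easy inclusion $T\subseteq U$ (equivalently, monotonicity of the closure) takes care of the other direction automatically.
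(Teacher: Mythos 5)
Your proposal is correct, but it takes a genuinely different route from the paper's proof. The paper argues by a concrete structural case analysis: it reduces to $B=i$ and the first output bit, locates the first metastable index $j$, and then splits on whether $g_{1,j-1}$ and $h_{1,j-1}$ differ, on $\parity(g_{1,j-1})$, on $h_1\in\{\metas,0,1\}$, and on $B=2$ versus $B>2$, exploiting the specific shape of valid Gray code strings (e.g.\ that the suffix after a metastable bit is $10\ldots0$). You instead isolate the semantic content once: both sides are superpositions of $\outt$ over sets of (state, input-bit) pairs --- the closure over the full product $U=\res(s^{(i-1)}_{\metas})\times\res(g_ih_i)$, the specification over the reachable set $T\subseteq U$ --- so the theorem reduces to showing that the over-approximation is harmless. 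Both of your reductions are sound: in the metastable-input-bit case, the excluded state pairs $\{00,11\}$ and $\{01,10\}$ are indeed impossible because the two resolutions of the single metastable prefix encode \emph{consecutive} values (so the other, stable prefix cannot lie strictly between them, nor equal both), whence the two reachable states differ in at most one bit and $T=U$; in the stable-input-bit case, a proper gap $\mathcal{A}\subsetneq\res(s^{(i-1)}_{\metas})$ forces $s^{(i-1)}_{\metas}=\metas\metas$, your recombination argument forces $\{01,10\}\subseteq\mathcal{A}$, and the four-column check of Table~\ref{tab:outt} (that $\outt(00,cd)$ and $\outt(11,cd)$ each equal $\outt(01,cd)$ or $\outt(10,cd)$ for stable $cd$) goes through. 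The only steps that must still be written out in full are the ``short analysis'' excluding the two-bit-flip state pairs and the identity $\bigstarr\mathcal{A}=s^{(i-1)}_{\metas}$ (every prefix resolution extends to a full resolution); both are genuinely short. What your approach buys is a reusable principle --- the closure over-approximates the reachable configurations, and one need only verify that $\outt$ is insensitive to the specific spurious configurations that valid strings can produce --- at the cost of slightly more abstraction; the paper's enumeration is more elementary but longer and tied to the Gray-code structure throughout.
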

\begin{proof}
By definition of $s^{(i-1)}_{\metas}$,
$\outt_{\metas}(s^{(i-1)}_{\metas},g_ih_i)$ does not depend on bits
$i+1,\ldots,B$. As by Observation~\ref{obs:valid_substrings} $g_{1,i},h_{1,i}\in
\validrg{i}$, we may thus w.l.o.g.\ assume that $B=i$.
For symmetry reasons, it suffices to show the claim for the first output bit
$\outt_{\metas}(s^{(B-1)}_{\metas},g_Bh_B)_1$ only; the other cases are analogous.

Recall that for $g,h\in \{0,1\}^B$, $s^{(B-1)}_{\metas}=s^{(B-1)}$ is the state
of the state machine given in Figure~\ref{fig:fsm} before processing the last
bit. Hence,
\begin{align*}
&\,\outt_{\metas}(s^{(B-1)}_{\metas},g_Bh_B)_1=\outt(s^{(B-1)},g_Bh_B)_1\\
=&\,\rgmax\{g,h\}_B=\rgmaxM\{g,h\}_B.
\end{align*}
Our task is to prove this equality also for the case where $g$ or $h$ contain a
metastable bit.

Let $j$ be the minimum index such that $g_j=\metas$ or $h_j=\metas$. Again, for
symmetry reasons, we may assume w.l.o.g.\ that $g_j=\metas$; the case
$h_j=\metas$ is symmetric. If $\repg{g_{1,j-1}}\neq\repg{h_{1,j-1}}$, suppose
w.l.o.g.\ (the other case is symmetric) that
$\repg{g_{1,j-1}}>\repg{h_{1,j-1}}$. Then $s^{(j-1)}=10$ and the state machine
is in absorbing state. Thus, regardless of further inputs, we get that
$s^{(B-1)}_{\metas}=s^{(B-1)}=10$ and
\begin{equation*}
\outt_{\metas}(s^{(B-1)}_{\metas},g_Bh_B)_1=\outt_{\metas}(10,g_Bh_B)_1=g_B.
\end{equation*}
Hence, suppose that $g_{1,j-1}=h_{1,j-1}$; we consider the case that
$\parity(g_{1,j-1})=0$ first, i.e., $s^{(j-1)}=00=s^{(0)}$. By
Observation~\ref{obs:valid_substrings}, $g_{j,B},h_{j,B}\in \validrg{B-j+1}$ and
thus w.l.o.g.\ $j=1$. If $B=1$,
\begin{align*}
\outt_{\metas}(s^{(B-1)}_{\metas},g_Bh_B)_1&=\outt_{\metas}(00,\metas h_B)_1
=\begin{cases}
1 & \mbox{if }h_1=1\\
\metas & \mbox{otherwise,}
\end{cases}
\end{align*}
which equals $\rgmaxM\{g,h\}_B$ (we simply have a $1$-bit code). If $B>1$, the
above implies that $g_{2,\ldots,B}=10\ldots0$, as the front bit of the code
changes only once, with $10\ldots0$ being the other bits (cf.\
Table~\ref{table:validinputs}). We distinguish several cases.
\begin{itemize}
\item [$h_1=\metas$:] Then also $h_{2,\ldots,B}=10\ldots 0$. Therefore
$g_B=h_B$, $\outt(s,g_Bh_B)_1=g_B=h_B$ for any $s\in \{0,1\}^2$, and
\begin{equation*}
\outt_{\metas}(s^{(B-1)}_{\metas},g_Bh_B)_1=g_B=h_B=\rgmaxM\{g,h\}_B.
\end{equation*}
\item [$h_1=1$ and $B=2$:] Note that $g$ is smaller than $h$ w.r.t.\ the total
order on valid strings (cf. Table~\ref{table:validinputs}), i.e., we need to
output $h_B=\rgmaxM\{g,h\}_B$. Consider the two resolutions of $g$, i.e., $01$
and $11$. If the first bit of $g$ is resolved to $0$, we end up with
$s^{(B-1)}=s^{(1)}=01$. If it is resolved to $1$, then $s^{(1)}=11$. Thus,
\begin{align*}
\outt_{\metas}(s^{(B-1)}_{\metas},g_Bh_B)_1
&=\outt_{\metas}(01,1h_B)_1*\outt_{\metas}(11,1h_B)_1\\
&=\bigstarr \bigcup_{h'\in \res(h)}\{h_B',\min\{1,h_B'\}\}\\
&=\bigstarr_{h'\in \res(h)} \{h_B'\}=h_B.
\end{align*}
\item [$h_1=1$ and $B>2$:] Again, $h_B=\rgmaxM\{g,h\}_B$. Consider the two
resolutions of $g$, i.e., $010\ldots0$ and $110\ldots0$. If the first bit of $g$
is resolved to $0$, we end up with $s^{(B-1)}=s^{(1)}=01$, as $01$ is an
absorbing state. If it is resolved to $1$, then $s^{(1)}=11$. As
$g_{2,\ldots,B}=10\ldots 0$, for any $h'\in \res(h)$, the state machine will end
up in either state $00$ (if $h'_{2,\ldots,B}=10\ldots 0$) or state $01$.
Overall, we get that (i) $s^{(B-1)}_{\metas}=01$, (ii) $s^{(B-1)}_{\metas}=00*01=0\metas$ and
$h_{2,\ldots,B}=1\ldots 0$, or (iii) $s^{(B-1)}_{\metas}=0\metas$ and
$h_{2,\ldots,B}=1\ldots 0\metas$ (cf.~Table~\ref{table:validinputs}). If (i)
applies, $\outt_{\metas}(s^{(B-1)}_{\metas},g_Bh_B)_1=h_B$. If (ii) applies,
$\outt_{\metas}(s^{(B-1)}_{\metas},g_Bh_B)_1=g_B=h_B$. If (iii) applies, then
\begin{align*}
\outt_{\metas}(s^{(B-1)}_{\metas},g_Bh_B)_1&=\outt_{\metas}(00,0\metas)_1*\outt_{\metas}(01,0\metas)_1\\
&=0*1*0*1=\metas = h_B.
\end{align*}
\item [$h_1=0$:] This case is symmetric to the previous two: depending on how
$g$ is resolved, we end up with $s^{(1)}=10$ or $s^{(1)}=00$, and need to
output $g_B$. Reasoning analogously, we see that indeed
$\outt_{\metas}(s^{(B-1)}_{\metas},g_Bh_B)_1=g_B$.
\end{itemize}
It remains to consider $\parity(g_{1,\ldots,j-1})=1$. Then
$s^{(j-1)}=11$. Noting that this reverses the roles of $\max$ and $\min$, we
reason analogously to the case of $\parity(g_{1,\ldots,j-1})=0$.
\end{proof}

\section{The Complete Circuit}\label{sec:complete}

Section~\ref{sec:optimal} breaks the task down to using the PPC framework to
compute $s^{(i)}_{\metas}$, $i\in [B]$, using $\diamond_{\metas}$ and then
$\outt_{\metas}$ to determine the outputs. Thus, we need to provide
implementations of $\diamond_{\metas}$ and $\outt_{\metas}$, and apply the
template from~\cite{ladner1980parallel}.

\subsection{Implementations of Operators}\label{sec:impop}
We provide optimized implementations based on fan-in $2$ $\ANDD$ and $\ORR$
gates and inverters here, cf.~Section~\ref{sec:model}. Depending on target
architecture and available libraries, more efficient solutions may be available.

%###
\paragraph*{Implementing $\diamond_{\metas}$}
%###
According to \cite{friedrichs16}, implementing $\diamond_{\metas}$ is possible,
and because $\diamond_{\metas}$ has constant fan-in and fan-out, it has constant
size.\\
We operate with the inverted first bits of the output of
$\diamond_{\metas}$. To this end, define $Nx:=\overline{x_1}x_2$ for $x\in
\{0,1,\metas\}^2$ and set
$$x \hatdiamM y := N(Nx\diamond_{\metas}Ny).$$
We compute $\bar{g}$ and work with inputs $\bar{g}$ and $h$ using operator
$\hat{\diamond}_{\metas}$. Theorem~\ref{thm:decompose} and elementary
calculations show that, for valid strings $g$ and $h$, we have
\begin{align*}
\left(\overline{g_1}h_1\hatdiamM\overline{g_2}h_2\right)
\hatdiamM\overline{g_3}h_3
 &= N(g_1h_1 \diamond_{\metas}g_2h_2)\hatdiamM\overline{g_3}h_3\\
 &= N((g_1h_1 \diamond_{\metas}g_2h_2)\diamond_{\metas}g_3h_3)\\
 &= N(g_1h_1 \diamond_{\metas}(g_2h_2\diamond_{\metas}g_3h_3))\\
 &= \overline{g_1}h_1\hatdiamM
\left(\overline{g_2}h_2\hatdiamM\overline{g_3}h_3\right),
\end{align*}
i.e., the order of evaluation of $\hatdiamM$ is insubstantial, just as
for $\diamond_{\metas}$. Moreover, as intended we get for all $1\leq i\leq j\leq
B$ that
\begin{align*}
\overline{g_i}h_i\hatdiamM\ldots\hatdiamM\overline{g_j}h_j
= N\left(g_ih_i\diamond_{\metas}\ldots\diamond_{\metas}g_jh_j\right).
\end{align*}
We concisely express operator $\diamond$ (Table~\ref{tab:output})
by the following logic formulas, where we already negate the first output bit.
\begin{align*}
  \overline{(s\diamond b)}_1 &=\overline{s_1}
  \cdot \left(s_2+\overline{b_1}\right)+s_2\cdot b_1\\
  (s\diamond b)_2 &=\overline{s_1}\cdot (s_2+b_2)
  +s_2\cdot\overline{b_2}
\end{align*}
This gives rise to depth-$3$ circuits containing in total $4$ $\ANDD$ gates, $4$
$\ORR$ gates, and $2$ inverters.\footnote{In the
base case, where $b_1=g_i$ for some $i\in \{1,\ldots,B\}$, we can save an
additional inverter.} From the gate behavior specified in Table~\ref{tab:gates},
one can readily verify that the circuit also implements
$\hat{\diamond}_{\metas}$ correctly.\footnote{Note that this is not true for
arbitrary logic formulas evaluating to $\overline{s \diamond b}$; e.g.,
$\overline{s\diamond b}_1=\left(\overline{s_1}+b_1\right)
  \cdot \left(s_2+\overline{b_1}\right)$, but the
corresponding circuit outputs $\metas\neq \overline{(10\diamond\metas
0)}_1=0$ for inputs $s=10$ and $b=\metas 0$.}
Since these circuits are identical to the ones used to
compute $\outt_{\metas}$, we give the implementation of such a selecting
circuit once in Figure~\ref{fig:select} and describe how to use it
in Table~\ref{tab:selection}. We remark that with identical select bits ($sel_1
= sel_2$), this circuit implements a $\cmux$ (a $\mux_{\metas}$ in our
terminology) as defined in~\cite{friedrichs16}.

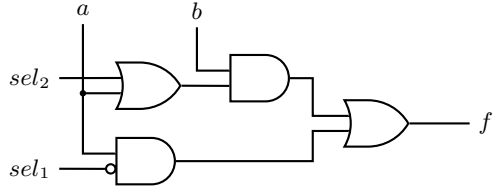
\begin{figure}
\centering\small
\begin{tikzpicture}[circuit logic US, every circuit symbol/.style={thick}]
	% Inputs
	\node (i1) at (0.7, 2) {$a$};
	\node (i0) at (2.2, 2) {$b$};
	\node (i2) at (0, -.1)  {$sel_1$};
	\node (i3) at (0, 1.1) {$sel_2$};
	%Output
	\node (o0) at (6, .5) {$f$};
	% Logic Gates
	\node[or gate, inputs={nn}] (or1) at (1.5,1) {};
	\node[and gate, inputs={ni}] (and1) at (1.5,0) {};
	\node[and gate, inputs={nn}] (and2) at (3,1.1) {};
	\node[or gate, inputs={nn}] (or2)  at (4.5,.5) {};
	% Connections
        \draw[thick] (i3) -- (or1.input 1);
        \filldraw (.7, .9) circle (1pt);
        \draw[thick] (i1) ++(down:1.1) -| (or1.input 2);
        \draw[thick] (i1) -- ++(down:1.9) -| (and1.input 1);
        \draw[thick] (i2) -- (and1.input 2);
        \draw[thick] (or1.output) -- (and2.input 2);
        \draw[thick] (i0) -- ++(down:.8) -| (and2.input 1);
        \draw[thick] (and2.output) -- ++(right:.3) -- ++(down:.5) -| (or2.input 1);
        \draw[thick] (and1.output) -- ++(right:1.8) -- ++(up:.4) -| (or2.input 2);
        \draw[thick] (or2.output) -- (o0);
\end{tikzpicture}
\caption{Selection circuit with inputs: $a$, $b$, $sel_1$, $sel_2$ and output
$f$, used to implement $\hatdiamM$ and $\outt_{\metas}$.}
\label{fig:select}
\end{figure}

\begin{table}
\centering
\begin{tabular}{|c |c |c |c ||c |}
	\hline
	$sel_1$ & $sel_2$ & $a$ & $b$ & $f$\\ \hline \hline \rule{0pt}{8pt}
	$\overline{b_1}$ & $\overline{b_1}$ & $s_2$ & $\overline{s_1}$ & $(s \hatdiamM b)_1$\\
	$b_2$ & $b_2$ & $s_2$ & $\overline{s_1}$ & $(s \hatdiamM b)_2$\\
	$\overline{s_1}$ & $s_2$ & $b_1$ & $b_2$ & $\outt_{\metas}(s,b)_1$\\
	$s_2$ & $\overline{s_1}$ & $b_2$ & $b_1$ & $\outt_{\metas}(s,b)_2$ \\
	\hline
\end{tabular}
\caption{Connections to a selection circuit to compute the respective
operator $f$.}\label{tab:selection}
\end{table}

%###
\paragraph*{Implementing $\outt_{\metas}$}
%###
According to~\cite{friedrichs16}, $\outt_{\metas}$ can be implemented by a
circuit in our model; as $\outt$ has constant fan-in and fan-out, the circuit
has constant size.\\
The multiplication table of $\outt_{\metas}$, which is equivalent to
Table~\ref{tab:output}, is given in Table~\ref{tab:outt}.\\
We can concisely express the output function given in Table~\ref{tab:outt} by the
following logic formulas.
\begin{align*}
  \outt(s,b)_1 &= (\overline{s_1}+b_1)\cdot b_2
  +\overline{s_2}\cdot b_1\\
  \outt(s,b)_2 &= s_1\cdot b_2 +(s_2+b_2)\cdot b_1
\end{align*}
As mentioned before, instead of computing $s_1$, we determine and use as
input $\overline{s_1}$. Thus, the above formulas give rise to depth-$3$ circuits
that contain in total $4$ $\ANDD$ gates, $4$ $\ORR$ gates, and $2$ inverters (see
Figure~\ref{fig:select} and Table~\ref{tab:selection}); in fact, the circuit is
identical to the one used for $\hatdiamM$ with different inputs. From the gate behavior
specified in Table~\ref{tab:gates}, one can readily verify that the circuit indeed also
implements $\outt_{\metas}$.

\subsection{Implementation of $s^{(i)}_{\metas}$}\label{sec:Si}
%recursive ppc construction
We make use of the Parallel Prefix Computation (PPC)
framework~\cite{ladner1980parallel} to efficiently compute $s^{(i)}_{\metas}$ in
parallel for all $i \in [B]$. This framework requires an associative operator
$OP$. In our case, $OP=\hatdiamM$, which by Theorem~\ref{thm:decompose} is
associative on all relevant inputs. Given an implementation of $OP$, the circuit
is recursively constructed as shown in Figure~\ref{fig:ppcodd}, where the base
case $n=1$ is trivial. For $n$ that is a power of $2$, the depth and gate
counts are given as~\cite{even2006teaching}
\begin{equation}\label{eq:ppccostdelay}
\begin{aligned}
	\delay(PPC_{OP}(n)) &= (2 \log_2 n -1)\cdot \delay(OP),\\
	\cost(PPC_{OP}(n)) &= (2n- \log_2 n -2)\cdot \cost(OP)\:.
\end{aligned}
\end{equation}
\ifnum\dayte=0
For example for $n=4$, see Figure~\ref{fig:ppcfour}.
\fi

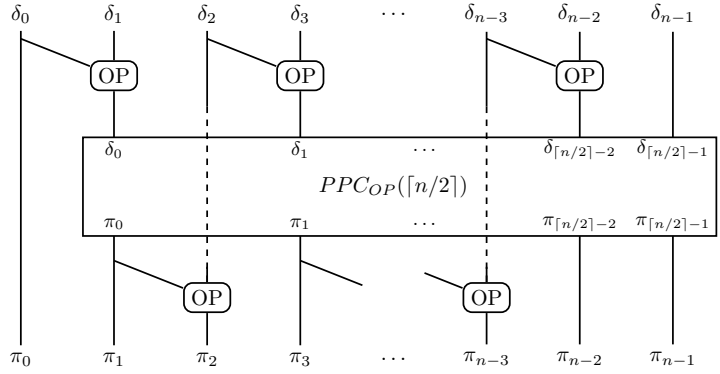
\begin{figure}\centering
\resizebox {.6\columnwidth} {!} {
\begin{tikzpicture}
\node (in1) at (0,5.6) {$\delta_0$};
\node (in2) at (1.5,5.6) {$\delta_1$};
\node (in3) at (3,5.6) {$\delta_2$};
\node (in4) at (4.5,5.6) {$\delta_3$};
\node (in5) at (6,5.6) {$\hdots$};
\node (in6) at (7.5,5.6) {$\delta_{n-3}$};
\node (in7) at (9,5.6) {$\delta_{n-2}$};
\node (in8) at (10.5,5.6) {$\delta_{n-1}$};

% First row of operators
\node[rectangle, draw, rounded corners, thick] (op1) at (1.5,4.6) {OP};
\node[rectangle, draw, rounded corners, thick] (op2) at (4.5,4.6) {OP};
\node[rectangle, draw, rounded corners, thick] (op3) at (9,4.6) {OP};

% Recursive PPC
\draw[thick] (1,2) rectangle (11.2,3.6);
\node at (6,2.8) {$PPC_{OP}(\lceil n/2\rceil)$};
\node at (1.5,3.4) {\small$\delta_{0}$};
%\node at (3,3.4) {\small$\delta_{1}$};
\node at (4.5,3.4) {\small$\delta_{1}$};
\node at (6.5,3.4) {\small$\hdots$};
\node at (9,3.4) {\small$\delta_{\lceil n/2\rceil-2}$};
\node at (10.5,3.4) {\small$\delta_{\lceil n/2\rceil-1}$};
\node at (1.5,2.2) {\small$\pi_{0}$};
%\node at (3,2.2) {\small$\pi_{1}$};
\node at (4.5,2.2) {\small$\pi_{1}$};
\node at (6.5,2.2) {\small$\hdots$};
\node at (9,2.2) {\small$\pi_{\lceil n/2\rceil-2}$};
\node at (10.5,2.2) {\small$\pi_{\lceil n/2\rceil-1}$};

% Last row of operators
\node[rectangle, draw, rounded corners, thick] (op4) at (3,1) {OP};
\node[rectangle, draw, rounded corners, thick] (op5) at (7.5,1) {OP};

% Outputs
\node (out1) at (0,0) {$\pi_0$};
\node (out2) at (1.5,0) {$\pi_1$};
\node (out3) at (3,0) {$\pi_2$};
\node (out4) at (4.5,0) {$\pi_3$};
\node (out5) at (6,0) {$\hdots$};
\node (out6) at (7.5,0) {$\pi_{n-3}$};
\node (out7) at (9,0) {$\pi_{n-2}$};
\node (out8) at (10.5,0) {$\pi_{n-1}$};

% Connections
% From inputs
\draw[thick] (in1) -- (out1);
\draw[thick] (in1) ++(down:.4) -- (op1);
\draw[thick] (in2) -- (op1);
\draw[thick] (in3) ++(down:.4) -- (op2);
\draw[thick] (in4) -- (op2);
\draw[thick] (in6) ++(down:.4) -- (op3);
\draw[thick] (in7) -- (op3);
\draw[thick] (in8) -- ++(down:2);
\draw[thick] (in3) -- ++(down:1.5);
\draw[thick] (in6) -- ++(down:1.5);
\draw[thick] (op4) -- ++(up:.5);
\draw[thick] (op5) -- ++(up:.5);
\draw[thick, dashed] (in3) ++(down:1.5) -- (op4) ++(up:.5);
\draw[thick, dashed] (in6) ++(down:1.5) -- (op5) ++(up:.5);
%From first row of OPs
\draw[thick] (op1) -- ++(down:1);
\draw[thick] (op2) -- ++(down:1);
\draw[thick] (op3) -- ++(down:1);
%From rec ppc
\draw[thick] (out2) -- ++(up:2);
\draw[thick] (out4) -- ++(up:2);
\draw[thick] (out7) -- ++(up:2);
\draw[thick] (out8) -- ++(up:2);
\draw[thick] (1.5,2) ++(down:.4) -- (op4);
\draw[thick] (4.5,2) ++(down:.4) -- (5.5,1.2);
\draw[thick] (6.5,1.4) -- (op5);
%From last row of OPs
\draw[thick] (op4) -- (out3);
\draw[thick] (op5) -- (out6);
\end{tikzpicture}
}
\caption{Recursive construction of $PPC_{OP}(n)$ for odd $n$, computing $\pi_i =
\delta_0 OP \ldots OP \delta_i$. For even $n$ the rightmost input ($\delta_{n-1}$) and output ($\pi_{n-1}$) are
not present. Dashed lines are not connected to $PPC_{OP}(\lceil n/2\rceil)$.}
\label{fig:ppcodd}
\end{figure}

\ifnum\dayte=0
\begin{figure}\centering
\begin{tikzpicture}
\node (in1) at (0,3) {$\delta_0$};
\node (in2) at (1.5,3) {$\delta_1$};
\node (in3) at (3,3) {$\delta_2$};
\node (in4) at (4.5,3) {$\delta_3$};

\node[rectangle, draw, rounded corners, thick] (op1) at (1.5,2) {OP};
\node[rectangle, draw, rounded corners, thick] (op2) at (3,.8) {OP};
\node[rectangle, draw, rounded corners, thick] (op3) at (4.5,2) {OP};
\node[rectangle, draw, rounded corners, thick] (op4) at (4.5,1) {OP};

\draw[thick] (in1) -- (out1);
\draw[thick] (in1) ++(down:.4) -- (op1);
\draw[thick] (in2) -- (op1);
\draw[thick] (op1) ++(down:.4) -- (op2);
\draw[thick] (in3) -- (op2);
\draw[thick] (op1) -- (out2);
\draw[thick] (op2) -- (out3);
\draw[thick] (in4) -- (op3);
\draw[thick] (in3) ++(down:.4) -- (op3);
\draw[thick] (op3) -- (op4);
\draw[thick] (op1) ++(down:.4) -- (op4);
\draw[thick] (op4) -- (out4);
\filldraw (0, 2.6) circle (1pt);
\filldraw (3, 2.6) circle (1pt);
\filldraw (1.5, 1.6) circle (1pt);

% Outputs
\node (out1) at (0,0) {$\pi_0$};
\node (out2) at (1.5,0) {$\pi_1$};
\node (out3) at (3,0) {$\pi_2$};
\node (out3) at (4.5,0) {$\pi_3$};
\end{tikzpicture}
\caption{Construction of a four input $PPC_4(OP)$\moti{make sure that the notations match the captions.}}
\label{fig:ppcfour}
\end{figure}
\fi
\subsection{Putting it All Together}
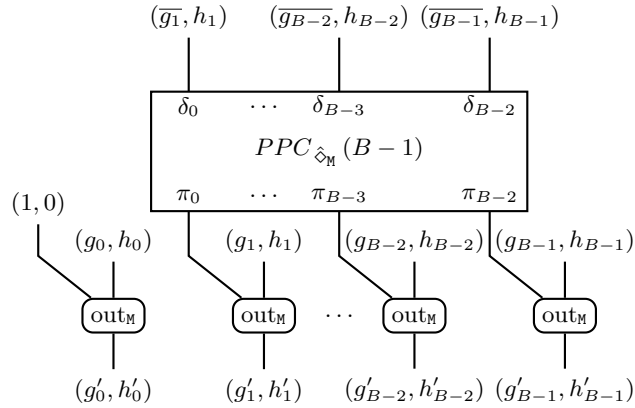
\begin{figure}\centering
\begin{tikzpicture}[scale=0.99]\small
% Inputs
\node (in1) at (0,5) {$(\overline{g_1},h_1)$};
\node (in2) at (2,5) {$(\overline{g_{B-2}},h_{B-2})$};
\node (in3) at (4,5) {$(\overline{g_{B-1}},h_{B-1})$};

% PPC_OP
\draw[thick] (-.5,2.4) rectangle (4.5,4);
\node at (2,3.2) {$PPC_{\mathlarger{\hatdiamM}}(B-1)$};
\node at (0,3.8) {\small$\delta_{0}$};
\node at (1,3.8) {\small$\hdots$};
\node at (2,3.8) {\small$\delta_{B-3}$};
\node at (4,3.8) {\small$\delta_{B-2}$};
\node at (0,2.6) {\small$\pi_{0}$};
\node at (1,2.6) {\small$\hdots$};
\node at (2,2.6) {\small$\pi_{B-3}$};
\node at (4,2.6) {\small$\pi_{B-2}$};

\node (zero) at (-2,2.5) {$(1,0)$};

%select in
\node (in5) at (-1,2) {$(g_0,h_0)$};
\node (in6) at (1,2) {$(g_1,h_1)$};
\node (in7) at (3,2) {$(g_{B-2},h_{B-2})$};
\node (in8) at (5,2) {$(g_{B-1},h_{B-1})$};

%selecet
\node[rectangle, draw, rounded corners, thick] (sel1) at (-1,1) {$\outt_{\metas}$};
\node[rectangle, draw, rounded corners, thick] (sel2) at (1,1) {$\outt_{\metas}$};
\node at (2,1) {$\hdots$};
\node[rectangle, draw, rounded corners, thick] (sel3) at (3,1) {$\outt_{\metas}$};
\node[rectangle, draw, rounded corners, thick] (sel4) at (5,1) {$\outt_{\metas}$};

% Outputs
\node (out1) at (-1,0) {$(g^\prime_0,h^\prime_0)$};
\node (out2) at (1,0) {$(g^\prime_1,h^\prime_1)$};
\node (out3) at (3,0) {$(g^\prime_{B-2},h^\prime_{B-2})$};
\node (out4) at (5,0) {$(g^\prime_{B-1},h^\prime_{B-1})$};

%Connections
\draw[thick] (in1) -- ++(down:1);
\draw[thick] (in2) -- ++(down:1);
\draw[thick] (in3) -- ++(down:1);
\draw[thick] (in5) -- (sel1);
\draw[thick] (in6) -- (sel2);
\draw[thick] (in7) -- (sel3);
\draw[thick] (in8) -- (sel4);
\draw[thick] (zero) -- (-2,1.8) -- (sel1);
\draw[thick] (0,2.4) -- (0,1.8) -- (sel2);
\draw[thick] (2,2.4) -- (2,1.8) -- (sel3);
\draw[thick] (4,2.4) -- (4,1.8) -- (sel4);
\draw[thick] (sel1) -- (out1);
\draw[thick] (sel2) -- (out2);
\draw[thick] (sel3) -- (out3);
\draw[thick] (sel4) -- (out4);
\end{tikzpicture}
\caption{Construction of $\twosort(B)$ from $out_{\metas}$ and
$PPC_{\hatdiamM}(B-1)$. $PPC_{OP}(B-1)$ is specified in Figure~\ref{fig:ppcodd},
and we use the implementations of $\hatdiamM$ and $\outt_{\metas}$ specified by
Figure~\ref{fig:select} and Table~\ref{tab:selection}. For input
$Ns^{(0)}=(1,0)$, $\outt_{\metas}$ reduces to an $\ANDD$ and an $\ORR$ gate.}
\label{fig:sortppc}
\end{figure}

\begin{theorem}
  The circuit depicted in Figure~\ref{fig:sortppc} implements $\twosort(B)$ according to Definition~\ref{def:twosort}. Its delay
  is $O(\log B)$ and its gate count is $O(B)$.
\end{theorem}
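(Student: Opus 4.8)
The plan is to establish correctness by composing the two structural theorems of Section~\ref{sec:optimal} and then to read off the complexity bounds from the PPC cost/delay formula~\eqref{eq:ppccostdelay}. The substantive work has already been carried out, so this final step is mainly assembly and bookkeeping: I would verify that the two stages of Figure~\ref{fig:sortppc} realize precisely the two-stage computation outlined at the start of Section~\ref{sec:optimal}.

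For the prefix stage, I would argue that the upper block is a parallel prefix computation on operator $OP = \hatdiamM$ with inputs $\delta_i = \overline{g_i}h_i$. By the identity $\overline{g_i}h_i\hatdiamM\ldots\hatdiamM\overline{g_j}h_j = N(g_ih_i\diamond_{\metas}\ldots\diamond_{\metas}g_jh_j)$ from Section~\ref{sec:impop} together with Theorem~\ref{thm:decompose}, each chained product equals $N\bigl(\bigstarr\bigdiamond_{k}\res(g_kh_k)\bigr) = Ns^{(i)}_{\metas}$, and, crucially, $\hatdiamM$ behaves associatively on these valid-string inputs. Associativity on the relevant inputs is exactly what the framework of~\cite{ladner1980parallel} requires, so the recursive construction of Figure~\ref{fig:ppcodd} correctly outputs the prefix states $\pi_i = Ns^{(i)}_{\metas}$ for all $i$, independently of the internal evaluation order.

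I would then feed these into the output stage. The $\outt_{\metas}$ block producing output index $i$ receives $Ns^{(i-1)}_{\metas}$ --- the hardwired constant $Ns^{(0)} = (1,0)$ for the first channel, and the PPC output $\pi_{i-1}$ otherwise --- together with $g_ih_i$. By Theorem~\ref{thm:out} its output equals $\rgmaxM\{g,h\}_i\,\rgminM\{g,h\}_i$, so stacking all $B$ blocks yields $g' = \rgmaxM\{g,h\}$ and $h' = \rgminM\{g,h\}$ bit by bit. Since these are valid strings whenever $g,h$ are (as recalled after Definition~\ref{def:twosort}), the circuit meets Definition~\ref{def:twosort}. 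Throughout I would invoke the fact, checked against the gate tables of Table~\ref{tab:gates} in Section~\ref{sec:impop}, that the concrete circuits of Figure~\ref{fig:select} and Table~\ref{tab:selection} correctly realize the closures $\hatdiamM$ and $\outt_{\metas}$ in our model; this is where the specific choice of logic formula matters, since, as the footnotes warn, equivalent Boolean formulas can produce spurious metastability.

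For the complexity, I would instantiate~\eqref{eq:ppccostdelay} with $n = B-1$ and $OP = \hatdiamM$. As $\hatdiamM$ and $\outt_{\metas}$ both have constant depth ($3$) and constant size ($10$ gates), the PPC block has delay $(2\log_2(B-1)-1)\cdot\delay(\hatdiamM) = O(\log B)$ and $(2(B-1)-\log_2(B-1)-2)\cdot\cost(\hatdiamM) = O(B)$ gates; the $B$ output blocks and the inverters computing $\overline{g}$ each add $O(B)$ gates and only $O(1)$ depth, giving total delay $O(\log B)$ and gate count $O(B)$. The only bookkeeping caveat is that~\eqref{eq:ppccostdelay} is stated for $n$ a power of two, but the construction of Figure~\ref{fig:ppcodd} handles arbitrary $n$ with the same asymptotics. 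The main obstacle is not in this assembly but lies upstream in Theorems~\ref{thm:decompose} and~\ref{thm:out}; the only genuine care needed here is to match the indexing and the $N$-transformation so that each $\outt_{\metas}$ block sees the state $Ns^{(i-1)}_{\metas}$ of the correct channel.
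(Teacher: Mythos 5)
Your proposal is correct and follows essentially the same route as the paper's own proof: invoke Theorem~\ref{thm:decompose} to justify computing $s^{(i-1)}_{\metas}$ via the PPC framework with $\hatdiamM$, invoke Theorem~\ref{thm:out} for the $\outt_{\metas}$ output stage, and read off the $O(\log B)$ delay and $O(B)$ gate count from~\eqref{eq:ppccostdelay} since both operators have constant-size implementations. Your version is merely more explicit about the $N$-transformation, the indexing, and the hardwired $Ns^{(0)}=(1,0)$ input, all of which the paper leaves implicit.
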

\begin{proof}
Theorem~\ref{thm:out} implies that we can compute the output by feeding
$s^{(i-1)}_{\metas}$ and $g_ih_i$, for $i\in \{1,\ldots,B\}$, into a circuit
computing $\outt_{\metas}$. We determine $s^{(i-1)}_{\metas}$ as discussed in
Section~\ref{sec:Si}, which is feasible by Theorem~\ref{thm:decompose}. We use
the implementations of $\hatdiamM$ and $\outt_{\metas}$ given in
Section~\ref{sec:impop}, cf.~Figure~\ref{fig:select} and Table~\ref{tab:selection},
respectively. As these circuits have constant depth and gate count, the overall
complexity bounds immediately follow from~\eqref{eq:ppccostdelay}.
\end{proof}

\section{Simulation Results}\label{sec:simulation}

%###
\paragraph*{Design Flow}
%###
Our design flow makes use of the following tools:
\begin{inparaenum}[(i)]
  \item design entry: Quartus,
  \item behavioral simulation: ModelSim,
  \item synthesis: Encounter RTL Compiler (part of Cadence tool set) with NanGate $45$\,nm Open Cell Library,
  \item place \& route: Encounter (part of Cadence tool set) with NanGate $45$\,nm Open Cell Library.
\end{inparaenum}
%###
\paragraph*{Design Flow adaptations for MC}
%###
During synthesis the VHDL description of a circuit is automatically mapped to standard cells provided by a standard cell
library. The standard cell library used for the experiments provides besides simple $\ANDD$, $\ORR$ or Inverter gates
also more powerful $\AOI$ (And-Or-Invert) gates, which combine multiple boolean connectives and optimize them
on transistor level. Since we did not analyze
the behaviour of more complex $\AOI$ gates in face of metastability, we restrict our implementation to use only
$\ANDD$, $\ORR$ and Inverter gates. To ensure this, we performed the mapping to standard cells by hand.
The following standard cells have been used to map the logic gates to hardware:
\begin{inparaenum}[(i)]
  \item INV\_X1: Inverter gate,
  \item AND2\_X1: $\ANDD$ gate,
  \item OR2\_X1: $\ORR$ gate.
\end{inparaenum}
In the documentation of the NanGate $45$\,nm Open Cell Library it can be seen
that these cells in fact compute the metastable closure of the respective Boolean connective.

After mapping the design by hand, we can disable the optimization in the
synthesis step and go on with place and route. This prevents the RTL Compiler from performing Boolean
optimization on the design, which may destroy the MC properties of our circuits.

\begin{table}
\begin{center}
\begin{tabular}{| c || c || c | c | c |}
\hline 
 $B$ & Circuit   & \# Gates & Area \scalebox{.8}{$[\mu
 $m$^{\scalebox{.6}{$2$}}]$}  & Delay \scalebox{.8}{[ps]}
\tabularnewline
\hline
\hline
\multirow{3}{*}{$B=2$} & This paper & 13        & 17.486                   & 119
\tabularnewline
 & \cite{date17} & 34        & 49.42                    & 268
 \tabularnewline
 & $\binary$ & 8 & 15.582 &145
\tabularnewline
\hline
\multirow{3}{*}{$B=4$}& This paper  & 55        & 73.752                    & 362
\tabularnewline
 & \cite{date17} & 160        & 230.3                    & 498
 \tabularnewline
 & $\binary$ & 19 & 34.58 &288
\tabularnewline
\hline
\multirow{3}{*}{$B=8$}& This paper & 169        & 227.29                   & 516
\tabularnewline
 & \cite{date17}  & 504        & 723.52                   & 827
 \tabularnewline
 & $\binary$ & 41 & 73.752 &477
\tabularnewline
\hline
\multirow{3}{*}{$B=16$}& This paper & 407       & 548.016                    & 805
\tabularnewline
 & \cite{date17}  & 1344       & 1928.262                    & 1233
\tabularnewline
 & $\binary$ & 81 & 151.648 &422
\tabularnewline
\hline
\end{tabular}
\end{center}
\caption{Comparison of gate count, delay, and area of $\twosort(B)$ from
this paper and~\cite{date17}, and $\binary$, an optimized comparator taking
binary inputs.}\label{table:summary}
\end{table}

\begin{table*}[b]\scriptsize
\begin{center}
\begin{adjustbox}{center}
\begin{tabular}{| c | c ||  c | c | c || c| c |c || c| c | c || c | c | c |}
\hline
\multirow{2}{*}{$B$} & \multirow{2}{*}{Circuit}   & \multicolumn{3}{|c ||}{$\foursort$} & \multicolumn{3}{|c ||}{$\sevensort$} & \multicolumn{3}{|c ||}{$\tensortc$} & \multicolumn{3}{|c |}{$\tensortd$}\\
                 \cline{3-14}
     &           & gates & area & delay & gates & area & delay & gates & area & delay & gates & area & delay
\tabularnewline
\hline
\hline
\multirow{3}{*}{$2$}& here & 65 & 87.402 & 357 & 208 & 279.741 & 714 & 377 & 506.912 & 912 & 403 & 541.968 & 833
\tabularnewline
                      & \cite{date17} & 170 & 247.016 &846 & 544 & 790.44 &1715 & 986 & 1432.62 &2285 & 1054 & 1531.467 &2010
                      \tabularnewline
                      & $\binary$ & 40 & 77.91 &478 & 128 & 249.326 &953 & 232 & 451.815 &1284 & 248 & 483 &1145
\tabularnewline
\hline
\multirow{3}{*}{$4$}& here & 275 & 368.641 & 640 & 880 & 1179.528 & 1014 & 1595 & 2137.905 & 1235 & 1705 & 2285.514 & 1133
\tabularnewline
                      & \cite{date17} & 800 & 1151.472 &1558 &2560 & 3684.541 & 3147 & 4640 & 6678.294 &4207 & 4960 & 7138.74 &3681
                      \tabularnewline
                      & $\binary$ & 95 & 172.935 & 906 & 304 & 553.28 & 1810 & 551 & 1002.848 & 2429& 589 & 1072.099 & 2143
\tabularnewline
\hline
\multirow{3}{*}{$8$}& here & 845 & 1136.184 & 1396 & 2704 & 3636.08 & 1921 & 4901 & 6590.283 & 2179 & 5239 & 7044.541 & 2059
\tabularnewline
                      & \cite{date17} & 2520 & 3617.67 & 2394 & 8064 & 11576.32 & 4715 &14616 & 20982.542 & 6252& 15624 & 22429.176 & 5481
                      \tabularnewline
                      & $\binary$ & 205 & 368.641 &1475 & 656 & 1179.528 &2948 & 1189 & 2137.905 &3945& 1271 & 2285.514 &3470
\tabularnewline
\hline
\multirow{3}{*}{$16$}& here  & 2035 & 2739.961 & 2069  & 6512 & 8767.374 & 3396 & 11803 & 15891.12 & 4030 & 12617 & 16987.194 & 3844
\tabularnewline
                      & \cite{date17}  & 6720 & 9640.75 &3396  & 21504 & 30849.875 &6415 & 38976 & 55916.448 &8437& 41664 & 59772.132 &7458
                      \tabularnewline
                      & $\binary$ & 405 & 530.67 &1298 & 1296 & 2425.99 &2600 & 2349 & 4397.085 &3474 & 2511 & 4700.304 &3050
\tabularnewline
\hline
\end{tabular}
\end{adjustbox}
\end{center}
\caption{Simulation results for metastability-containing sorting networks with
$n\in \{4,7,10\}$ for $B$-bit inputs. $\tensortc$ optimizes gate count~\cite{codish2014twenty}, $\tensortd$ depth~\cite{bundala2014optimal}; for $n\in \{4,7\}$, the sorting networks are optimal w.r.t.\
both measures. Simulation results are: (i) number of gates, (ii) postlayout area $[\mu m^2]$ and
(iii) prelayout delay $[ps]$.}
\label{table:sorting}
\end{table*}
%###
\paragraph*{The binary benchmark: $\binary$}
%###
Following~\cite{date17}, we also compare our sorting networks to a standard
(non-containing!) sorting design. $\binary$ uses a simple VHDL statement to
compare both inputs:
\lstinputlisting[language=vhdl, frame=single, caption={{\tt VHDL code excerpt of
binary comparator}}, label={lst:comp}, firstline=18, lastline=22]{comparator.vhd}
Each output is connected to a standard multiplexer, where the signal $greater$ is used as the select
bit for both multiplexers.

The binary design follows a standard design flow, which uses the tools listed
above. In short, $\binary$ follows the same design process as
$\twosort$, but then undergoes optimization using a more powerful
set of basic gates.

We emphasize that the more powerful $\AOI$ gates combine multiple boolean
functions and optimize them on gate level, yet each of them is still counted as
one gate. Thus, comparing our design to the binary design in terms of gate
count, area, and delay disfavors our solution. Moreover, the optimization
routine switches to employing more powerful gates when going from $B=8$ to
$B=16$ (See Table~\ref{table:sorting}) resulting in a \emph{decrease} of the
delay of the binary implementation.

Nonetheless, our design performs comparably to the non-containing binary design
in terms of delay, cf.~Table~\ref{table:summary}. This is quite notable, as
further optimization on the transistor level or using more powerful $\AOI$
gates is possible, with significant expected gains. The same applies to gate
count and area, where a notable gap remains. Recall, however, that the binary
design hides complexity by using more advanced gates and does not contain
metastability.

We remark that we refrained from optimizing the design by making use of all
available gates or devising transistor-level implementations for two reasons.
First, such an approach is tied  to the utilized library or requires design of
standard cells. Second, it would have been unsuitable for a comparison
with~\cite{date17}, which does not employ such optimizations either.

%###
\paragraph*{Comparison to State of the Art}
%###

Our circuits show large improvements over~\cite{date17} in all performance
measures. Delays, gate counts, and area are all smaller by factors between
roughly $1.5$ and $3.5$. In particular, for $B=16$ delay is roughly cut in half,
while gate count and area decrease by factors of $3$ or more.

\section{Discussion}\label{sec:conclusion}

In this paper, we provide asymptotically optimal MC sorting primitives. We
achieve this by applying results on parallel prefix
computation~\cite{ladner1980parallel}, which requires to establish that the
involved operators behave associative on the relevant inputs. Our
circuits are purely combinational and are glitch-free (as they are MC). 
Compared to standard sorting networks, we roughly match delay, but fall behind
on gate count and area. However, we used gate-level implementations
of $\outt_{\metas}$ and $\diamond_{\metas}$ restricted to $\ANDD$ and $\ORR$
gates and inverters. Transistor-level implementations, which are a
straightforward optimization, would decrease size and delay of the derived
circuits further. We expect that this will result in circuits that perform on
par with standard sorting networks. In light of these properties, we believe our
circuits to be of wide applicability.

\pagebreak
%###
\paragraph*{Acknowledgements}
%###
This project has received funding from the European Research Council (ERC) under the European Union's Horizon 2020 research and
innovation programme (grant agreement 716562).

\bibliographystyle{plain}
\bibliography{comp_short}

\begin{thebibliography}{10}

\bibitem{ajtai83}
M.~Ajtai, J.~Koml\'{o}s, and E.~Szemer\'{e}di.
\newblock {An $\BO(n \log n)$ Sorting Network}.
\newblock In {\em (STOC)}, 1983.

\bibitem{date17}
Johannes Bund, Christoph Lenzen, and Moti Medina.
\newblock {Near-Optimal Metastability-Containing Sorting Networks}.
\newblock In {\em (DATE)}, 2017.

\bibitem{bundala2014optimal}
Daniel Bundala and Jakub Z{\'a}vodn{\`y}.
\newblock Optimal sorting networks.
\newblock In {\em (LATA)}, pages 236--247. Springer, 2014.

\bibitem{codish2014twenty}
Michael Codish, Lu{\'\i}s Cruz-Filipe, Michael Frank, and Peter Schneider-Kamp.
\newblock $25$ comparators is optimal when sorting $9$ inputs (and $29$ for
  $10$).
\newblock In {\em (ICTAI)}, 2014.

\bibitem{even2006teaching}
Guy Even.
\newblock {On teaching fast adder designs: Revisiting Ladner \& Fischer}.
\newblock In {\em Theoretical Computer Science}, pages 313--347. Springer,
  2006.

\bibitem{friedrichs16}
Stephan Friedrichs, Matthias F{\"{u}}gger, and Christoph Lenzen.
\newblock {Metastability-Containing Circuits}.
\newblock {\em CoRR}, abs/1606.06570, 2016.

\bibitem{tdc16}
Matthias F\"ugger, Attila Kinali, Christoph Lenzen, and Thomas Polzer.
\newblock {Metastability-aware Memory-efficient Time-to-Digital Converters}.
\newblock In {\em (ASYNC)}, 2017.

\bibitem{ginosar11tutorial}
R.~Ginosar.
\newblock {Metastability and Synchronizers: A Tutorial}.
\newblock {\em IEEE Design Test of Computers}, 28(5):23--35, 2011.

\bibitem{kinniment08}
David~J. Kinniment.
\newblock {\em {Synchronization and Arbitration in Digital Systems}}.
\newblock Wiley Publishing, 2008.

\bibitem{knuth1998art}
Donald~E. Knuth.
\newblock {The Art of Computer Programming Vol.\ 3: Sorting and Searching},
  1998.

\bibitem{ladner1980parallel}
Richard~E Ladner and Michael~J Fischer.
\newblock Parallel prefix computation.
\newblock {\em (JACM)}, 27(4):831--838, 1980.

\bibitem{async16}
Christoph Lenzen and Moti Medina.
\newblock Efficient metastability-containing gray code 2-sort.
\newblock In {\em (ASYNC)}, pages 49--56, 2016.

\bibitem{Mar81}
Leonard Marino.
\newblock {General Theory of Metastable Operation}.
\newblock {\em IEEE Transactions on Computers}, C-30(2):107--115, 1981.

\end{thebibliography}

%\bibliographystyle{abbrv}
%\bibliography{comp}

%\balance

\end{document}